\documentclass{article}
\usepackage[top=1in, bottom=1in, left=1in, right=1in]{geometry}
\usepackage{amsmath,amsthm,amssymb,color,latexsym,dsfont,mathtools} 
\geometry{a4paper}    
\usepackage{graphicx, float}
\usepackage{braket}
\usepackage[hypertexnames=false]{hyperref}
\usepackage{cleveref}
\usepackage{bm}
\usepackage[doi=true, isbn=false, eprint=false, backend=biber, citestyle=nature, sorting=none, giveninits=true, url=false]{biblatex}
\usepackage{authblk}
\usepackage{lipsum}
\usepackage[acronym]{glossaries}
\usepackage{makecell}
\usepackage{multicol}
\addbibresource{refs.bib}
\addbibresource{general_randomizedGrad.bib}
\addbibresource{GradFlow.bib}
\usepackage[dvipsnames]{xcolor}
\usepackage{comment}
\renewcommand{\vec}[1]{\ensuremath{\boldsymbol{#1}}}

\newtheorem*{theorem*}{Theorem}
\newtheorem*{corollary*}{Corollary}

\newtheorem{definition}{Definition}
\newtheorem{proposition}{Proposition}
\newtheorem{lemma}{Lemma}

\newacronym{VQA}{VQA}{variational quantum algorithm}
\newacronym{VQE}{VQE}{variational quantum eigensolver}

\title{On the convergence of the variational quantum eigensolver and quantum optimal control}

\author[1]{Marco Wiedmann\footnote{E-Mail: marco.wiedmann@fau.de}}
\author[1]{Daniel Burgarth}
\author[2]{Gunther Dirr}
\author[3,4]{Thomas Schulte-Herbrüggen}
\author[3,4]{Emanuel Malvetti}
\author[5]{Christian Arenz\footnote{E-Mail: carenz1@asu.edu}}
\affil[1]{Department Physik, Friedrich-Alexander-Universität Erlangen-Nürnberg, 91058 Erlangen, Germany}
\affil[2]{Institute of Mathematics, Julius-Maximilians-Universität Würzburg, 97074 Würzburg, Germany}
\affil[3]{School of Natural Sciences, Technische Universität München, 85748 Garching, Germany}
\affil[4]{Munich Center for Quantum Science and Technology (MCQST) \& Munich Quantum Valley (MQV), 80799 München, Germany}
\affil[5]{School of Electrical, Computer and Energy Engineering, Arizona State University, Tempe, AZ 85281, USA}

\begin{document}
    \maketitle
    
\vspace{-0.75cm}

\begin{abstract}
When does a variational quantum algorithm converge to a globally optimal solution? Despite the large literature around variational approaches to quantum computing, the answer is largely unknown. We address this open question by developing a convergence theory for the variational quantum eigensolver (VQE). By leveraging the terminology of quantum control landscapes, we prove a sufficient criterion that characterizes when convergence to a ground state of a Hamiltonian can be guaranteed for almost all initial parameter settings. More specifically, we show that if (i) a parameterized unitary transformation allows for moving in all tangent-space directions (local surjectivity) in a bounded manner and (ii) the gradient descent used for the parameter update terminates, then the VQE converges to a ground state almost surely. We develop constructions that satisfy both aspects of condition (i) and analyze two commonly employed families of quantum circuit ansätze. Finally, we discuss regularization techniques for guaranteeing gradient descent to terminate, as for condition~(ii), and draw connections to the halting problem. 
\end{abstract}

\begin{multicols}{2}

\section{Introduction}
The advent of noisy, intermediate-scale quantum devices has sparked an explosion in the development of \glspl{VQA} \cite{cerezo_variational_2021}. These algorithms aim to solve an optimization problem by using classical and quantum computing resources in tandem. Namely, a classical optimizer is employed in conjunction with a quantum device to optimize parameters of a quantum circuit to solve a given computational problem. The quantum device is used to assess the quality of the proposed solution at each step while the classical computer provides the parameter update. 

VQAs face the challenge that the classical optimization routine often has to navigate a complicated optimization landscape \cite{VQA_traps, PhysRevLett.127.120502}. The existence of local optima can hinder the search for the optimal solution, which makes it challenging to develop rigorous criteria for when VQAs solve the optimization problem at hand. At the heart of this challenge lies the nonconvex nature of the optimization problem and the complex interplay between the parameterized quantum circuit and the problem instance \cite{lee_progress_2021}. To date, a variety of numerical results from previous works have shown that overparameterization can improve convergence to the globally optimal solution \cite{kiani_learning_2020, PRXQuantum.1.020319,larocca_theory_2023,lee_progress_2021,you_convergence_2022}.
However, rigorous criteria that characterize when a parameterized quantum circuit yields convergence of the VQA to the optimal solution remain challenging to derive \cite{you_convergence_2022}. Here we address this challenge by building on the long history of quantum control landscapes \cite{Chakrabarti01102007,GE2022314} to develop a convergence theory for a specific class of VQAs: the \gls{VQE} \cite{TILLY20221} that aims to minimize the expectation value of a given Hamiltonian, and thus approximates its ground state, on a quantum device.

VQAs aim to optimize a cost function $J[U]$ that characterizes how close a quantum circuit $U$ is to the problem solution. If the minimization is carried out directly over $U$ (i.e.~over the unitary group), the rich theory of Riemannian gradient flows, cf. e.g. \cite{Bro88+91, HM94, smith_optimization_1994, SGDH08, absil_optimization_2008, lee_first-order_2019} can be employed to design adaptive quantum algorithms \cite{ADAPTVQE,PhysRevA.107.062421, gluza_double-bracket_2025, mcmahon_equating_2025, suzuki_grovers_2025,  PhysRevResearch.5.033227, malvetti_randomized_2024} that  prepare the ground state for almost all initial states \cite{PhysRevResearch.5.033227,malvetti_randomized_2024,gluza_double-bracket_2025, mcmahon_equating_2025, suzuki_grovers_2025}. However, typically a fixed unitary transformation $U = U(\vec{\theta})$ that is parameterized by some variables $\vec{\theta}\in \mathbb R^{M}$ (e.g., rotational angles etc.) is used as a quantum circuit ansatz to solve the optimization problem. In this case, criteria for when a classical optimizer that iteratively updates $\vec{\theta}$ converges to a ground state are not known as parameterizing $U$ can introduce local optima.
This setting is similar to the setting of quantum control where control field parameters are used to steer the time evolution of a quantum system in a desired fashion.

In this sense, the convergence of VQAs is closely related to
finding quantum optimal controls in the form of shaped control fields. In fact, VQAs can be regarded as solving quantum optimal control problems at the circuit level \cite{PRXQuantum.2.010101}.
The ease in finding optimal controls that, e.g., prepare a desired state or implement a target quantum logic gate, has inspired decades of research centered around analyzing quantum control landscapes \cite{Chakrabarti01102007}. A seminal paper \cite{doi:10.1126/science.1093649}  argued that under certain conditions finding optimal controls through numerical optimizers is straightforward, i.e., the numerical optimizer does not get stuck in suboptimal solutions but finds the optimal controls for the desired task. 
More specifically according to \cite{Chakrabarti01102007, russell_control_2017}, if the control system is (i) controllable, (ii) control fields are unconstrained, and (iii) the system can locally be steered in all directions (local surjectivity, as depicted in Fig. \ref{fig:Intro}) by varying the controls, then the quantum control landscape is free of local optima. However, these conditions have sparked debate \cite{PhysRevLett.106.120402,PhysRevLett.108.198901} over whether (iii) can be satisfied, given that systems exist for which (iii) does not hold \cite{doi:10.1142/S0219025713500215,PhysRevA.86.013405}, i.e. that admit singular controls.

While for a single qubit (i.e., a d = 2 dimensional quantum system) there are control systems for which the existence of local optima can be ruled out [32], in general, quantum control systems which are locally surjective when restricting the control functions to a finite dimensional parameter space are, to the best of our knowledge, not known.
Moreover, the absence of local optima does not imply that an optimizer will converge to a global optimum, as saddle points can also hinder the search. 

In this work we develop crucial steps towards a theory of convergence for VQEs and quantum op-
\begin{figure}[H]
    \centering    \includegraphics[width=\linewidth]{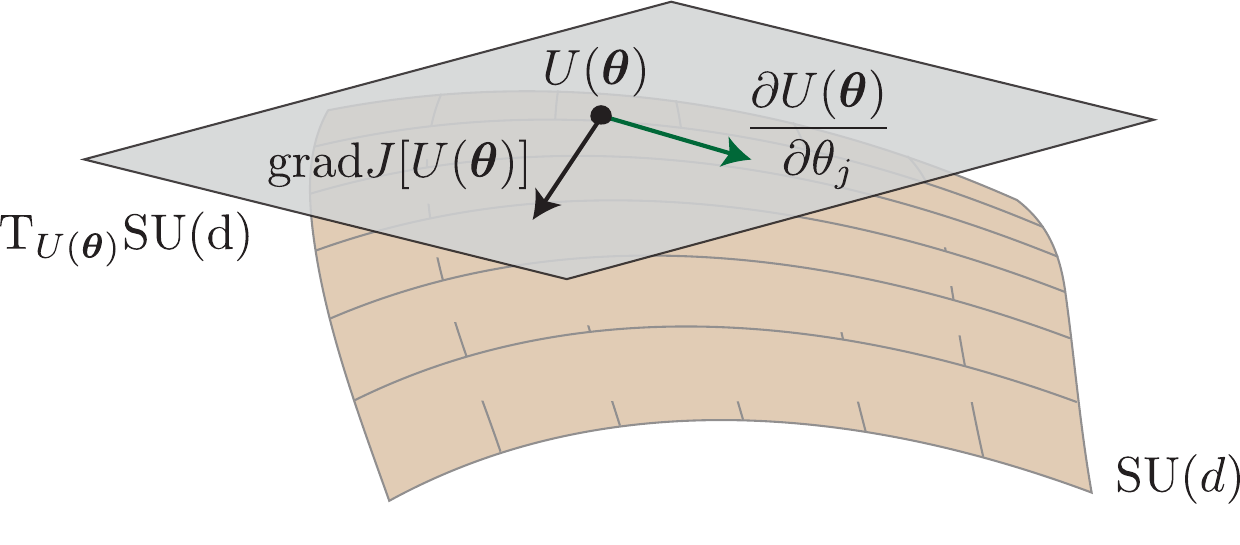}
    \caption{Schematic representation of the key criterion (local surjectivity) that establishes convergence of VQEs to the ground state. If local surjectivity is satisfied, the variation $\frac{\partial U(\bm{\theta})}{\partial\theta_{j}}$ (green) of a parameterized unitary transformation $U(\bm{\theta})\in\mathrm{SU}(d)$ with respect to all variational parameters $\theta_{j}$ spans the tangent space $\text{T}_{U(\bm{\theta})}\mathrm{SU}(d)$ of the special unitary group \(\mathrm{SU}(d)\) at all points of the optimization landscape. In this case the critical point structure of the VQE is determined by the Riemannian gradient $\operatorname{grad}J$. We show that then the Riemannian gradient only vanishes at global optima and strict saddle points that are avoided almost surely by gradient descent when gradients are Lipschitz continuous. 
    Thus, if the gradient descent used for the parameter update converges, it converges for almost all initial configurations to a global minimum (main Theorem).   
    }
    \label{fig:Intro}
\end{figure}
\noindent timal control. This is achieved by providing a systematic way to construct parameterized unitary transformations $U(\bm{\theta})$ that are free of singular controls, hereafter referred to as \emph{singular points}, and thereby satisfy local surjectivity (iii). 
We show that in this case, suboptimal solutions of VQEs correspond to strict saddle points. These saddle points are avoided almost surely by gradient descent algorithms, utilizing a result from non-linear stability theory
\cite{lee_first-order_2019}. 
However, the developed ansatz families satisfying (iii) with unconstrained parameters $\bm{\theta}\in\mathbb R^{M}$ result in a technical hurdle. The convergence of the gradient descent algorithm that updates $\bm{\theta}$ cannot be guaranteed anymore, i.e., the situation that the algorithm escapes to infinity can technically not be ruled out, although numerical evidence is favorable. These convergence results are summarized in the main theorem, which provides a criterion for when VQEs converge to the ground state. We go on to analyze for \emph{the \(\mathbb{SU}(d)\)-gate ansatz} \cite{wiersema_here_2024, wiersema_geometric_2025, wierichs_symmetric_2023, banchi_measuring_2021, chen_special-unitary_2025} and \emph{the product-of-exponentials ansatz}
\cite{cerezo_variational_2021, leone_practical_2024, xiao_physics-constrained_2024, kandala_hardware-efficient_2017, wang_quantum_2023, hadfield_quantum_2019, golden_quantum_2023, wecker_progress_2015, park_hamiltonian_2024, anselme_martin_simulating_2022} whether the assumptions of the Theorem are satisfied. We show that for both families with $M \leq d^{2}-1$ variational parameters singular points where local surjectivity breaks always exist. We go on to establish a stronger result in \Cref{sec:overparameterization} for the \(\mathbb{SU}(d)\)-gate ansatz by proving that singular points cannot be removed regardless of how much the quantum circuit is overparameterized. In light of these challenges, we provide parameterized quantum circuit constructions with $M=2(d^{2}-1)$ and $M=d^{2}$ parameters that satisfy local surjectivity and discuss control field parameterizations for applications in quantum control. We conclude with a discussion about leveraging symmetries to reduce the number of optimization parameters while maintaining the favorable convergence properties and the termination of gradient-descent type algorithms. We briefly address regularization techniques that may avoid gradient descent running to infinity and draw connections to the halting problem.

\section{Results}

\subsection{Setting}

\Glspl{VQA} implement a parameterized unitary transformation \(U(\vec{\theta})\) on a quantum computer.
This unitary is used to estimate a cost function \(J(\vec{\theta})\), which is minimized by a classical optimization routine like gradient descent that iteratively updates the parameters according to the update rule
\begin{equation}
    \label{eq:gradient_descent}
    \vec{\theta}_{k+1} = \vec{\theta}_k - \gamma \nabla J(\vec{\theta}_k).
\end{equation}
Here, $k=0,1,\cdots$ labels the iterative step, \(\gamma\) is the step size of the algorithm, and $\nabla J(\vec{\theta}_k)$ denotes the gradient with respect to $\vec{\theta}$ at the value $\vec{\theta}_k$.

The \gls{VQE} seeks to find the ground state energy of a Hamiltonian \(H\) by minimizing the energy expectation value 
\begin{equation}    
\label{eq:VQE_cost}
J(\vec{\theta}) = \bra{\psi_0} U^\dagger(\vec{\theta}) H U(\vec{\theta}) \ket{\psi_0},
\end{equation}
for some initial state \(\ket{\psi_0}\) that is evolved by \(U(\vec{\theta})\) to prepare the state $\ket{\psi(\vec{\theta})}=U(\vec{\theta})\ket{\psi_{0}}$. Since this cost function is independent of the global phase of $U(\vec{\theta})$, it usually suffices to consider unitary transformations $U(\vec{\theta})$ that live in the special unitary group $\mathrm{SU}(d)$ of unitary $d\times d$ matrices with unit determinant.  

VQEs have a broad range of applications, including computing the ground state energy of molecules in quantum chemistry \cite{VQEChem} and solving combinatorial optimization problems \cite{farhi_quantum_2014}. When $H=\mathds{1}-\ket{V}\bra{V}$ is given by a projector formed by some target state $\ket{V}$, the cost function in \eqref{eq:VQE_cost} becomes the fidelity error $J(\vec{\theta})=1-|\langle V|\psi(\vec{\theta}) \rangle |^{2}$ whose minimization corresponds to a state preparation problem. This is a typical scenario in quantum control where the unitary transformation $U(\vec{\theta})$ correspond to the time evolution operator that solves the Schr\"{o}dinger equation governed by a time dependent Hamiltonian of the form 
$ H_f(t) = H_0 + \sum_{j=1} f_j(t) H_j$ \cite{Brif_2010,
doi:10.1126/science.288.5467.824,Koch}, cf. \Cref{sec:QC_vs_QVAs} for more details.
In this setting, the parameters $\vec{\theta}$ correspond to some parameterization of the control fields $f_{j}(t)$, such as the amplitudes of piecewise constant controls \cite{KHANEJA2005296} or frequencies and phases of the control fields that are expanded in a Fourier series, cf. \cite{Müller_2022} and \cref{sec:QC_vs_QVAs}. Beyond state preparation, the implementation of a quantum logic gate $V$ in quantum computing can also be achieved by minimizing the gate error $J(\vec{\theta})=1-|\langle V,U(\vec{\theta})\rangle|^{2}$ where here $\langle A,B\rangle=\text{Tr}\{A^{\dagger}B\}$ denotes the Hilbert Schmidt inner product. 

However, despite the broad utility of minimizing cost functions of the form \eqref{eq:VQE_cost} through gradient-descent \eqref{eq:gradient_descent}, criteria for when a parameterized unitary transformation $U(\vec{\theta})$ yields convergence of the corresponding \gls{VQE} to the target ground state are not known. 
The difficulty mainly arises from the parameterization of the unitary group, which might introduce additional critical points and from its unbounded domain.

To make things more precise, we call a smooth map from \(\mathds{R}^M\) into the group \(\mathrm{U}(d)\) of unitary matrices a \emph{parameterization} of the unitary matrices. We want to briefly remark that, contrary to the standard terminology of differential geometry, we do not require any injectivity of the parameterization, as we want to explicitly allow for overparameterization.
Moreover, we always require the parameterization to be defined on the entirety of \(\mathds{R}^M\) rather than just some subset, to avoid situations where the gradient descent leaves the domain of the parameterization.
Further, we want to call a parameterization \emph{surjective}, if it can ``reach'' any unitary up to a global phase factor.
This includes in particular parameterizations, that are surjective on \(\mathfrak{su}(d)\) in the classical sense.

Notice that the cost function \eqref{eq:VQE_cost} can be regarded as a composition of two parts: 1) a parameterization \(U(\vec{\theta})\) of the unitaries and 2) a cost function \(J[U] = \bra{\psi_0} U^\dagger H U \ket{\psi_0}\) defined on the unitary group itself.
By abuse of notation, we call the composition \(J[U(\vec{\theta})]\) of these two parts just \(J(\vec{\theta})\).

As a direct application of the chain rule, the gradient of the cost function with respect to the $j$-th component $\theta_{j}$ of $\vec{\theta}$ can be expressed in terms of the Hilbert-Schmidt inner product of the derivatives of 
$\vec{\theta} \mapsto U(\vec{\theta})$ and $U \mapsto J[U]$, cf.~e.g.\cite{Bro88+91,smith_optimization_1994,SGDH08,PhysRevA.86.013405,10.1063/1.2198837}
as well as \cite{HM94}[Chap.~2.1].
\begin{equation}
    \begin{split}
    \label{eq:gradient_decomposition}
    \frac{\partial}{\partial \theta_j}J(\vec{\theta}) &= \langle \operatorname{grad}J[U(\vec{\theta})], \Omega_j(\vec{\theta}) \rangle\\
    \end{split}
\end{equation}
Here, \(\operatorname{grad}J[U] = [H, U\ket{\psi_0}\bra{\psi_0}U^\dagger]\) is the \textit{Riemannian gradient} with respect to the standard metric on \(\mathrm{U}(d)\) of the cost function $U \mapsto J[U]$, 
i.e.~the direction tangential to the manifold of steepest increase of \(J\) and
\begin{align}
\label{eq:LieAlgElements}
\Omega_j(\vec{\theta}) = U^\dagger(\vec{\theta}) \frac{\partial}{\partial \theta_j} U(\vec{\theta}).
\end{align}
Note that both quantities are pulled back to the tangent space of the unitary group at the identity given by the Lie algebra $\mathfrak{su}(d)$ of skew Hermitian and traceless $d\times d$ matrices. The ``true'' gradient can be obtained by multiplying 
$\operatorname{grad}J[U]$ from the left with $U$. As depicted in Fig. \ref{fig:Intro}, we use this differential geometric perspective to characterize when the gradient descent \eqref{eq:gradient_descent} converges to the global minimum of the cost function \eqref{eq:VQE_cost} that corresponds to the desired target state. 

\subsection{Avoiding suboptimal solutions almost surely}
For analytic cost functions and a sufficiently small steps size $\gamma$ it is well known that gradient descent always either diverges to infinity or converges to a critical point \(\vec{\theta}^*\) at which the gradient vanishes \(\nabla J(\vec{\theta}^*) = 0\) (see Theorem 4.1 in \cite{absil_convergence_2005}). It is also known that the Riemannian gradient of the cost functionals corresponding to equation \eqref{eq:VQE_cost} only vanishes at the, possibly degenerate, global maxima and saddle points (see e.g.~\cite{HM94}[Chap.~1.3, Thm.3.4/Proof 3.5 and Chap.~2.1] and \cite{Duistermaat83})
that correspond to the eigenstates of $H$ (see Methods, \cref{sec:Methods}). 

Now, if the $\Omega_{j}$'s span the full tangent space for all $\vec{\theta}$, which we refer to as \emph{local surjectivity}, the critical point structure is determined by the Riemannian gradient as the situation that $\operatorname{grad}J[U]$ is nonzero and orthogonal to all $\frac{\partial}{\partial \theta_{j}}U(\vec{\theta})$ cannot occur.
Note that since global phases do not carry physical meaning, usually the cost function does not depend on the overall global phase.
Hence it is sufficient if the $\Omega_{j}$'s span the Lie algebra \(\mathfrak{su}(d)\) of traceless, skew-Hermitian \(d \times d\) matrices.
\begin{definition}[Local surjectivity]
\label{def:local_surjectivity}
We say that a unitary parameterization $\vec{\theta} \to U(\vec{\theta})$ is locally surjective at $\vec{\theta}_0$ if
\begin{equation}
    \mathfrak{su}(d) \subseteq \mathrm{span}\{\Omega_j(\vec{\theta}_0) \mid j=1,...,M\}\,. 
\end{equation}
The parametrization is called \emph{locally surjective} 
if it is \emph{locally surjective} for all $\vec{\theta} \in \mathds{R}^M$.
\end{definition}
\noindent
We note that locally surjective maps are known as \emph{submersions} in differential geometry \cite{lee_introduction_2013}. 
Also, we want to remark that the term local surjectivity at $\vec{\theta}_0$ is sometimes defined to mean a map, which takes neighborhoods of $\vec{\theta}_0$ to neighborhoods of $U(\vec{\theta}_0)$. 
Our notion of local surjectivity implies the latter, but not the 
other way around, \cite{AMR88}[Thm.~2.5.9] or \cite{grasse_higher-order_1986}.

The local surjectivity property ensures that at each point $\vec{\theta}$ in the Euclidean parameter space $U(\vec{\theta})$ has access to any direction on the manifold of the special unitary group, i.e., by varying $\vec{\theta}$ the system can be steered in all tangent space directions.
We call points in the optimization landscape where local surjectivity breaks down \emph{singular points}. Local surjectivity is independent of the problem structure defined by $H$ but only depends on the parameterized unitary transformation $U(\vec{\theta})$. However, there has been a longstanding debate in the quantum control community whether systems exist for which local surjectivity can be satisfied. The more general problem of the existence of submersions from an open manifold to another manifold has also been discussed in the differential geometry literature \cite{phillips_submersions_1967}. 

Below we solve this issue by explicitly constructing unitary transformations that exhibit the local surjectivity property. With further details found in the methods section, we show that then the critical points $\vec{\theta}^{*}$ of $J(\vec{\theta})$ correspond to global optima or saddle points whose Hessian has at least one negative eigenvalue. Such saddle points are known in the classical optimization community as strict saddles points, which are avoided for almost all (e.g., with respect to the Lebesgue measure) initial values by gradient descent provided the gradient is globally Lipschitz continuous \cite{lee_first-order_2019, pmlr-v49-lee16}. Guided by this, we establish the following Theorem.  

\begin{theorem*}
    Let \(H\) be a Hamiltonian and $U(\vec{\theta})\in\mathrm{SU}(d)$ be a real analytic parameterized unitary transformation used as a VQE ansatz to minimize the expectation value \eqref{eq:VQE_cost} of $H$. If local surjectivity holds for $U(\vec{\theta})$ and the Lie algebra elements $\Omega_j(\vec{\theta})$ in \eqref{eq:LieAlgElements} are uniformly bounded, then for a suffiently small stepsize $\gamma$ and almost all initial points \(\vec{\theta}_0\), a VQE updated by the gradient descent algorithm \eqref{eq:gradient_descent} either diverges to infinity, or converges to a ground state of $H$.
\end{theorem*}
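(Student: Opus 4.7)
The plan is to reduce the convergence analysis in Euclidean parameter space to the Riemannian picture on $\mathrm{SU}(d)$ summarised above, and then invoke the strict-saddle avoidance theorem of \cite{pmlr-v49-lee16}. Three ingredients are required: a characterisation of the critical points of $J(\vec{\theta})$, a proof that every suboptimal critical point is a strict saddle in the Euclidean sense, and Lipschitz continuity of $\nabla J$ on all of $\mathbb{R}^{M}$.

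For the critical-point characterisation I would apply the chain-rule decomposition \eqref{eq:gradient_decomposition} together with \cref{def:local_surjectivity}: since the $\Omega_{j}(\vec{\theta})$ span $\mathfrak{su}(d)$ at every $\vec{\theta}$, the vanishing condition $\nabla J(\vec{\theta}^{*})=0$ is equivalent to $\operatorname{grad}J[U(\vec{\theta}^{*})]=[H,\ket{\psi(\vec{\theta}^{*})}\bra{\psi(\vec{\theta}^{*})}]=0$, i.e., to $\ket{\psi(\vec{\theta}^{*})}$ being an eigenstate of $H$. The Methods section identifies these critical points on the manifold side as either global minima (ground states of $H$) or points whose Riemannian Hessian admits a negative eigenvalue of the form $E_{k}-E_{n}$ with $E_{k}<E_{n}$.

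For the strict-saddle property and the Lipschitz estimate I would differentiate the chain rule once more. At a critical point $\operatorname{grad}J$ vanishes, so the second-derivative-of-$U$ contributions drop out and the Euclidean Hessian takes the form $\nabla^{2}J(\vec{\theta}^{*})[\vec{v},\vec{w}] = \operatorname{Hess}J[U(\vec{\theta}^{*})](\Omega(\vec{v}),\Omega(\vec{w}))$, where $\Omega(\vec{v}):=\sum_{j}v_{j}\Omega_{j}(\vec{\theta}^{*})$. Local surjectivity makes $\vec{v}\mapsto\Omega(\vec{v})$ surjective onto $\mathfrak{su}(d)$, so any negative direction of the Riemannian Hessian is attained in $\mathbb{R}^{M}$, giving the strict-saddle property. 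For Lipschitz continuity I would use $\partial_{j}J(\vec{\theta})=\langle[H,\rho(\vec{\theta})],\Omega_{j}(\vec{\theta})\rangle$ with $\rho(\vec{\theta})=U(\vec{\theta})\ket{\psi_{0}}\bra{\psi_{0}}U^{\dagger}(\vec{\theta})$; a uniform bound on the $\Omega_{j}$'s, together with $\|H\|<\infty$ and unitarity of $U$, yields a uniform bound on the Euclidean Hessian and hence a global Lipschitz constant $L$. With step size $\gamma<1/L$ the stable-manifold theorem of \cite{pmlr-v49-lee16} implies that the set of initial conditions from which gradient descent converges to a strict saddle has Lebesgue measure zero, and combining this with the standard result \cite{absil_convergence_2005} that analytic gradient descent either diverges to infinity or converges to a critical point completes the argument.

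The main obstacle I anticipate is the strict-saddle step: lifting a negative Riemannian Hessian direction through a parameterisation that is only pointwise surjective and need not be a local diffeomorphism, which forces one to argue directly at the level of the pullback bilinear form rather than via a local inverse. A secondary subtlety is that uniform boundedness of the $\Omega_{j}$ themselves does not by itself control the symmetric second-derivative terms $U^{\dagger}\partial_{k}\partial_{j}U$ that appear in the Euclidean Hessian; closing the Lipschitz estimate therefore requires either slightly strengthening the boundedness hypothesis to include these terms or exploiting the specific structure of the ansatz families constructed later in the paper.
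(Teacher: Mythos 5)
Your proposal matches the paper's proof essentially step for step: local surjectivity plus the chain-rule decomposition \eqref{eq:gradient_decomposition} forces every critical point to produce an eigenstate of \(H\); at a critical point the Euclidean Hessian is the pullback of the Riemannian Hessian through the \(\Omega_j\)'s (so the negative Riemannian directions at non-ground eigenstates lift to negative Euclidean eigenvalues, i.e.\ strict saddles); and the conclusion follows by combining the result of \cite{absil_convergence_2005} with the strict-saddle avoidance result of \cite{pmlr-v49-lee16} under the Lipschitz/step-size condition, exactly as in the Methods section. The only deviations are minor or to your credit: you evaluate the pulled-back quadratic form directly on a preimage of a negative direction where the paper invokes a Sylvester-inertia-type theorem for the congruence \(S^T\nabla^2 J(\vec{\theta}^*)S\), and the caveat you flag about Lipschitz continuity (uniform bounds on the \(\Omega_j\) do not by themselves control the \(U^\dagger\partial_i\partial_j U\) terms in the Hessian away from critical points) is a genuine subtlety that the paper's proof passes over when it asserts boundedness of the Euclidean Hessian directly from boundedness of the \(\Omega_j\).
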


We remark that the uniform boundedness of the \(\Omega_j(\vec{\theta})\) in an arbitrary matrix norm is a technical requirement that ensures the gradients to be globally Lipschitz continuous (cf. \Cref{sec:lipschitz}).
Further, we expect that rather than analyticity, smoothness of the parameterization already is sufficient (see \Cref{sec:morse-bott} for details).

The Theorem gives guarantees on the convergence of the \gls{VQE}. In particular, it says that, given the algorithm does not run off to infinity but instead converges to some finite critical point \(\vec{\theta}^*\), then, except for a negligible set of initial values, that point does indeed correspond to a global optimum of the cost function. In other words, the algorithm almost never converges to a suboptimal solution.

In the ``few'' extraordinary cases, the algorithm converges to an excited eigenstate of \(H\) instead.
However, the theorem guarantees that these cases are rare in the sense that when sampling the initial point from a reasonable distribution, e.g. Gaussian or uniformly from some finite region, the probability of hitting such an initial point is zero.

\subsection{Gimbal locks in  commonly used parameterized quantum circuits}

To gain some intuition for the importance of local surjectivity, we start by considering a single qubit example. 

The Hilbert space of a single qubit has complex dimension \(d=2\) and the Lie algebra of traceless skew Hermitian \(2 \times 2\) matrices \(\mathfrak{su}(2)\) has real dimension three.
A basis for \(\mathfrak{su}(2)\) is, up to a prefactor of \(i\), given by the Pauli matrices
\begin{equation}
    \sigma_X = \begin{bmatrix}
        0 & 1 \\ 1 & 0
    \end{bmatrix}, \  \sigma_Y = \begin{bmatrix}
        0 & -i \\ i & 0
    \end{bmatrix}, \  \sigma_Z = \begin{bmatrix}
        1 & 0 \\ 0 & -1
    \end{bmatrix}.
\end{equation}
A common parameterization for \(\mathrm{SU}(2)\) are the Euler angles, defined in the \(X-Y-X\) convention as
\begin{equation}
    \label{eq:euler_angles}
    U(\vec{\theta}) = \exp(-i\sigma_X\theta_3) \exp(-i\sigma_Y\theta_2) \exp(-i\sigma_X\theta_1).
\end{equation}
Indeed, any single qubit unitary can be written in this form up to an irrelevant global phase. However, despite the existence of parameters $\vec{\theta}=(\theta_{1},\theta_{2},\theta_{2})$ that achieve any unitary (reachability), local optimization methods may not be able to find the parameters that achieve a desired unitary transformation, which we discuss below. 

If, for example, the Riemannian gradient points into the \(\sigma_Z\)-direction at the identity, then a gradient algorithm initialized at the origin \(\vec{\theta} = (0, 0, 0)^\mathrm{T}\) will immediately get stuck, as only the \(\sigma_X\) and \(\sigma_Y\) directions are available.
Note that even though \(U(\vec{\theta})\) may not be able to change into all directions at certain points, \(\ket{\psi(\vec{\theta})}\) might still be.
However, we can also construct situations, where the problem arises both on the level of \(U(\vec{\theta})\) and \(\ket{\psi(\vec{\theta})}\), as the next example shows.
Assume that the gradient descent update \eqref{eq:gradient_descent} is at the point \(\vec{\theta} = \left(0, \frac{\pi}{2}, 0\right)^\mathrm{T}\) at which the Lie algebra elements $\Omega_{j}(\vec{\theta})$ are given by \(\Omega_1(\vec{\theta}) = -i\sigma_X\), \(\Omega_2(\vec{\theta}) = -i \sigma_Y\) and \(\Omega_3(\vec{\theta}) = i\sigma_X\).
Consequently, if the Riemannian gradient points into the \(\sigma_Z\)-direction at this point, then the Lie algebra elements have
no suppport in the direction of the Riemannian gradient and the Euclidean gradient will vanish. As such, the optimization routine terminates at \(\vec{\theta}\), even though the optimal solution may not have been reached yet.

The same issue arises on the level of the state \(\ket{\psi(\vec{\theta})}\) instead of the unitary \(U(\vec{\theta})\), which is illustrated in \Cref{fig:bloch-gimbal-lock}.
The effective loss of a degree of freedom at certain points in the parameterization landscape is well known as \textit{gimbal lock} for the Euler-angle parameterization of rotations in a three-dimensional 
space~\cite{BMM13}, as prominently encountered, e.g., in the Apollo-11 Program~\cite{HOR18}.

The existence of such singular points, at which local surjectivity breaks down carries over from single qubit examples to higher dimensions, e.g., for
\begin{figure}[H]
    \centering
    \includegraphics[width=\linewidth]{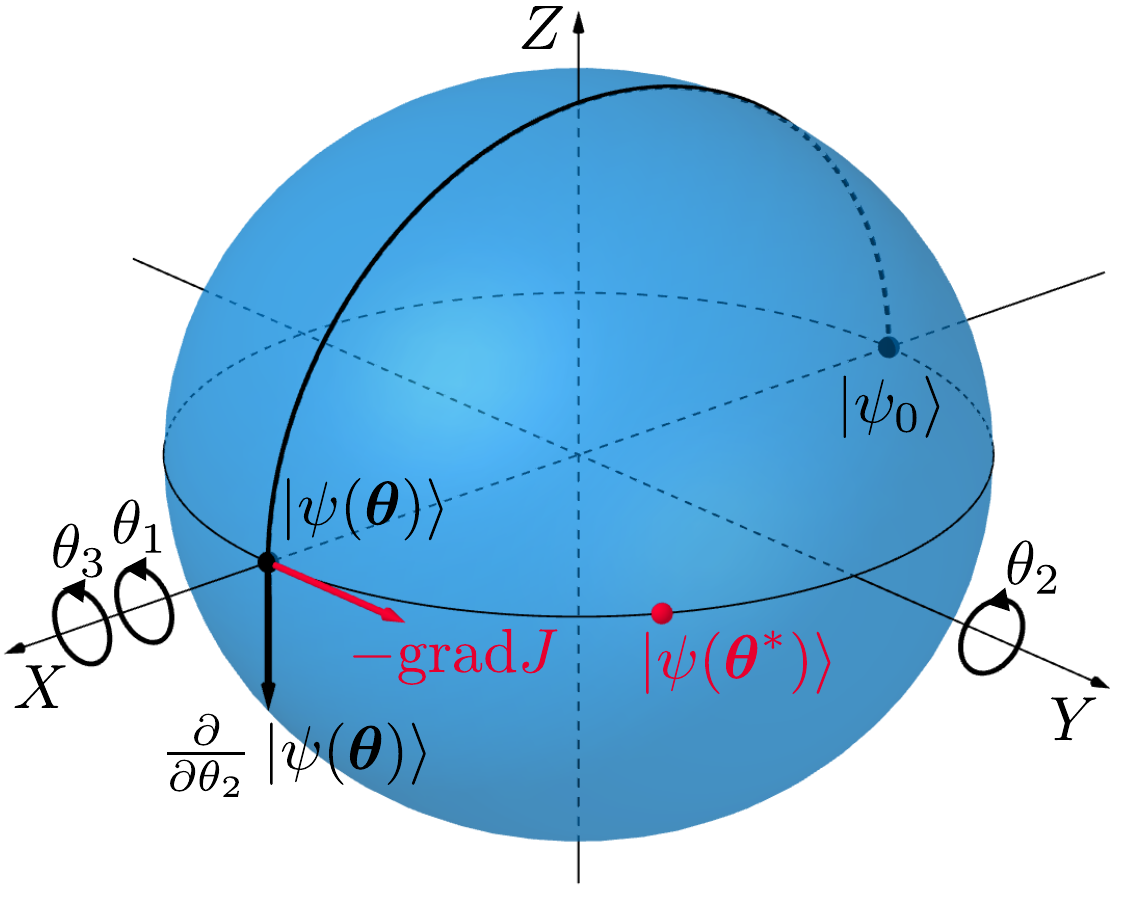}
    \caption{Example of a singular point in the Euler angle parameterization of \(\mathrm{SU}(2)\) on the Bloch sphere.
    We consider a \gls{VQE} that uses the X-Y-X Euler angle parameterization of \(\mathrm{SU}(2)\) given by equation \eqref{eq:euler_angles} to find the ground state \(\ket{\psi(\vec{\theta}^*)}\) (red dot) of some problem Hamiltonian, e.g. \(H = \mathds{1} - \ket{\psi(\vec{\theta^*})}\bra{\psi(\vec{\theta^*})}\). The first and last rotation will rotate the state around the \(X-\)axis on the Bloch sphere, the second rotation around the \(Y\)-axis. In principle one can map any point on the Bloch sphere to any other point in this way.
    However, when choosing \(\ket{\psi_0} = \ket{-}\), i.e. the minus one eigenstate of \(\sigma_x\), as the initial state, at any point \(\vec{\theta} = {\left(\theta_1, \frac{\pi}{2}, \theta_3\right)}^T\) of the parameter landscape, only the second rotation will affect the state and move it to the opposite end of the Bloch sphere.
    Hence the derivatives \(\frac{\partial}{\partial \theta_1} \ket{\psi(\vec{\theta})}\) and \(\frac{\partial}{\partial \theta_3} \ket{\psi(\vec{\theta})}\) vanish and \(\frac{\partial}{\partial \theta_2} \ket{\psi(\vec{\theta})}\) points along the \(Z\)-direction.
    Since the cost functional only decreases further when the state is moved along the equator, the Riemannian gradient (in this case, defined on the Bloch sphere itself) points into the \(Y\)-direction and is therefore perpendicular to all available derivatives.
    Hence, the Euclidean gradient vanishes and the optimization stops at this point --- although the ground state has not yet been reached.
    }
    \label{fig:bloch-gimbal-lock}
\end{figure}
\noindent two commonly employed quantum circuit parameterizations, which we introduce next. The quantum circuit defined by 
\begin{align}
\label{def:canonical_coordinates_first_kind}
U(\vec{\theta})=\exp(X(\vec{\theta})),~~~~X(\vec{\theta})=\sum_{j=1}^{d^{2}-1}\theta_{j}X_{j},
\end{align} 
where $\{X_{j} \mid j = 1, ..., d^2-1\}$ is a basis for $\mathfrak{su}(d)$, is referred to as \emph{the \(\mathbb{SU}(d)\)-gate ansatz} \cite{wiersema_here_2024, wiersema_geometric_2025, wierichs_symmetric_2023, banchi_measuring_2021, chen_special-unitary_2025}. In the context of Lie groups, the corresponding parameters are in this case also called \textit{the canonical coordinates of the first kind} \cite{hilgert_structure_2012}. 
Another large class of ansätze in the literature is of the form
\begin{align}
\label{def:canonical_coordinates_second_kind}
U(\vec{\theta})=\prod_{j = 1}^{M} \exp(\theta_j X_j),
\end{align}
for some arbitrary set of generators \(\{X_j \mid j = 1,...,M\}\).
This includes for example the hardware efficient ansatz \cite{cerezo_variational_2021, kandala_hardware-efficient_2017, leone_practical_2024, xiao_physics-constrained_2024}, the quantum alternating operator ansatz \cite{farhi_quantum_2014, wang_quantum_2023, hadfield_quantum_2019, golden_quantum_2023} and the Hamiltonian variational ansatz \cite{wecker_progress_2015, park_hamiltonian_2024, anselme_martin_simulating_2022}. We will refer to this family of ans\"atze as \emph{product-of-exponentials ansätze}.
Since in this case \(\Omega_j (\vec{0}) = X_j\), it immediately follows that a product-of-exponentials ansatz can only be locally surjective if the generators span the full Lie algebra.
This is usually not given for the hardware efficient or quantum alternating operator ansatz, hence they are susceptible to gimbal lock at the origin of the parameter space.
For this reason we only consider product-of-exponentials ansätze where the \(M = d^2-1\) generators form a basis of \(\mathfrak{su}(d)\) from this point onwards. In this case, the parameters are also called \textit{canonical coordinates of the second kind} by the Lie group community \cite{hilgert_structure_2012}.

Since \(\operatorname{dim} \mathfrak{su}(d) = d^2-1\) it is clear that at least \(d^2-1\) parameters are required for local surjectivity.
One can proof that for \(M = d^2 - 1\) parameters the \(\mathbb{SU}(d)\)-gate and product-of-exponentials ansatz still always admit singular points and therefore cannot be locally
surjective (see page 313 in \cite{hilgert_structure_2012} and \cite{altafini_use_2002}).
Hence, they are in principle vulnerable to the gimbal lock problem described above.  The properties of the different ans\"atze are summarized in \Cref{tab:ansätze}.
Note that in both cases the problem arises from the fact that $\vec{\theta}$ is required to range over all of $\mathbb{R}^M$. Restricting $\vec{\theta}$ to suitable open subsets or non-linear rescaling can in principle remove these singular points (cf. \Cref{sec:construction}) at the cost of losing surjectivity.

\renewcommand{\arraystretch}{1.5}
\begin{table*}
    \centering
    \begin{tabular}{c|c|c|c}
    \hline\hline
         Ansatz & Surjective & Locally surjective & \# of parameters \\
         \hline\hline
         \(\mathbb{SU}(d)\)-gate \cite{wiersema_here_2024, wiersema_geometric_2025, wierichs_symmetric_2023, banchi_measuring_2021, chen_special-unitary_2025} & yes & no & \(d^2-1\) \\
         \makecell{Overparameterized \(\mathbb{SU}(d)\)-gate} & yes & no & \(> d^2-1\) \\
         \makecell{product-of-exponentials \cite{cerezo_variational_2021, leone_practical_2024, xiao_physics-constrained_2024, kandala_hardware-efficient_2017, wang_quantum_2023, hadfield_quantum_2019, golden_quantum_2023, wecker_progress_2015, park_hamiltonian_2024, anselme_martin_simulating_2022}} & \(\text{no}^*\) & no & \(\leq d^2-1\) \\
         \makecell{Overparameterized product of \\ exponentials  \cite{you_convergence_2022, wierichs_avoiding_2020, larocca_theory_2023, liu_analytic_2023, wiersema_exploring_2020, lee_progress_2021}} & \makecell{yes, with \\ enough parameters} & unknown & \(> d^2-1\) \\
         Composite (Eq. \eqref{eq:composite_ansätze}) & yes & yes & \(2(d^2-1)\) \\
         Cayley transform & \(\text{yes}^{**}\) & yes & \(d^2\)\\
         \hline
    \end{tabular}
    \caption{Comparison of the different ansätze described in this work and their properties. Surjectivity in our context means that any unitary in \(\mathrm{U}(d)\) can be reached up to a global phase factor, while local surjectivity means that the partial derivatives span the tangent space of \(\mathrm{SU}(d)\) at any point in the parameter landscape (cf. \Cref{def:local_surjectivity}). The \(\mathbb{SU}(d)\)-gate ansatz requires \(d^2-1\) parameters, is surjective on \(\mathrm{SU}(d)\) but is never locally surjective even with the use of overparameterization (see \cref{sec:overparameterization}). The general product-of-exponentials ansatz (*) need not be surjective, is not locally surjective and typically uses less than \(d^2-1\) parameters.
    With sufficient overparameterization it becomes surjective \cite{dalessandro_uniform_2021}, however it remains unknown if it also becomes locally surjective. The composite ansätze  introduced in this work are both surjective and locally surjective and require \(2(d^2-1)\) parameters. (**) The Cayley transform is technically speaking not surjective onto \(\mathrm{U}(d)\) or even \(\mathrm{SU}(d)\) in the mathematical sense, but it can still reach any unitary up to an irrelevant global phase, and hence fulfills our notion of surjectivity from above. Also, it can be modified to be surjective onto \(\mathrm{SU}(d)\) in the mathematical sense (see \Cref{sec:modified_cayley}). It is locally surjective and uses \(d^2\) parameters. Hence, the requirements for the main theorem hold only for the composite ansätze and for the Cayley transform and we obtain almost guaranteed convergence to the global optimum if the gradient descent terminates.}
    \label{tab:ansätze}
\end{table*}

In the \gls{VQA} community it is common to overparameterize an ansatz. That is, by adding more variational paramters $M>d^{2}-1$ e.g. to the quantum circuits defined in \eqref{def:canonical_coordinates_first_kind} and \eqref{def:canonical_coordinates_second_kind}, the hope is to reduce the number of local minima in the optimization landscape and improve the overall trainability of the model \cite{larocca_theory_2023}.
However, in \cref{sec:Methods} (Methods) we show that this strategy cannot succeed in completely removing singular points from the quantum circuit in equation \eqref{def:canonical_coordinates_first_kind}.
For the product-of-exponentials ansatz \eqref{def:canonical_coordinates_second_kind} it remains an open problem if singular points can be completely removed through overparameterization, but numerical evidence suggests otherwise.

We remark that the study of these parameterized unitary transformations and their singular points is by far not limited to the field of \glspl{VQA}.
For example, canonical coordinates of the second kind are widely used in the field of robotics \cite{brockett_robotic_1984}, where the singular points correspond to kinematic singularities of the robot, which prove to be a challenge for control algorithms \cite{donelan_kinematic_2010}.

\subsection{Construction of locally surjective parameterizations}
\label{sec:construction}
Now that we have established that even common parameterized quantum circuits are plagued by singular points at which gradient descent can get stuck, we introduce alternative parameterizations which are still surjective but do not suffer from this issue, i.e., which are locally surjective.
We will leverage two key insights to construct locally surjective parameterized unitary transformations: 
\begin{enumerate}
    \item Both the \(\mathbb{SU}(d)\)-gate ansatz and the product-of-exponentials ansatz have their singular points outside a ball with nonzero radius \(R\) in the parameter space, when the \(X_j\) span \(\mathfrak{su}(d)\).
    \item If we divide the parameterization into two parts
    \begin{equation}
        U(\vec{\theta}) = W(\theta_{k+1}, ..., \theta_n) V(\theta_1, ..., \theta_k),
    \end{equation} then \(\Omega_1(\vec{\theta})\) to \(\Omega_k(\vec{\theta})\) do not depend on \(\theta_{k+1}, ...,\theta_n\).
\end{enumerate}
If one naively multiplies two copies of any of the aforementioned ansätze and restricts the parameter domain of one of the copies to a suitably small region around zero, this already eliminates all singular points. The first \(d^2-1\) Lie algebra elements are independent of all other parameters and hence always span \(\mathfrak{su}(d)\).
However, one cannot make sure that the optimization routine does not leave the domain.
Therefore, we artificially blow up this domain, e.g. by taking a suitably scaled arc-tangent.

We further require that the parameterization can in principle reach any unitary up to global phase.
We want to call this property in this context \emph{surjectivity}, or in quantum optimal control contexts it is also called \emph{controllability}. To summarize, we construct a surjective and locally surjective parameterized unitary transformation as follows.

\begin{proposition}[Composite ans\"atze]
    Let \(\vec{\theta}, \vec{\phi} \in \mathds{R}^{d^2-1}\), \(V(\vec{\phi})\) be either the \(\mathbb{SU}(d)\)-gate ansatz or the product-of-exponentials ansatz and \(R > 0\) be such that all singular points of \(V\) are outside the ball of radius \(R\) in parameter space.
    Further, let \(W(\vec{\theta})\) be any surjective parameterization of \(\mathrm{SU}(d)\) with uniformly bounded derivatives. In particular, a generalized Euler angle ansatz \cite{tilma_generalized_2002, dalessandro_optimal_2004} is a suitable choice for \(W(\vec{\theta})\).
    We construct a surjective and locally surjective parameterized unitary transformation of \(\mathrm{SU}(d)\) as
    \begin{equation}
        \label{eq:composite_ansätze}
        U(\vec{\theta}, \vec{\phi}) = W(\vec{\theta}) V \left( \frac{2R}{\pi} \arctan(\vec{\phi}) \right),
    \end{equation}
    where \(\arctan(\vec{\phi})\) is to be understood as element-wise application of the \(\arctan\) function.
    
\end{proposition}
We refer to this family of ansätze as the \emph{composite ansätze}, as they are composed of two different components \(V\) and \(W\).
The order of \(V\) and \(W\) can be interchanged, but we will only prove local surjectivity for this particular ordering.

In the context of quantum optimal control, a composite ansatz can be for example be realized by a control system with access to an orthonormal basis of the Lie algebra as controls.
The pulse should be split into two parts.
The first part should be restricted such that the pulse area remains smaller than \(R\) on each component.
The second part can be any control sequence that corresponds to a controllable system, i.e., for a sufficiently long evolution time and unconstrained control fields it should be able to reach any unitary \cite{jurdjevic_control_1972}.
We are aware that the interactions necessary for each of the exponentials in the composite ansatz (e.g., arbitrary Pauli strings) are typically not natively supported by the hardware.
They could be approximated through Trotterization \cite{dalessandro_general_2009}.
If and how this composite ansatz can be applied to more generic control systems is left for future study.

It remains to find a suitable value for the radius \(R\) such that the singular points of \(V\) lie outside the open ball.
For the \(\mathbb{SU}(d)\)-gate ansatz, an appropriate value is related to the so called \textit{injectivity radius} of the unitary group, which is per definition the largest radius for which the exponential map does not have any singular points inside the ball of that radius, measured in the Frobenius norm around the origin. For the unitary group the injectivity radius is given by \(\sqrt{2} \pi\) (cf. e.g. p. 313 of \cite{hilgert_structure_2012}).
Hence, when the basis \(X_j\) is chosen to be orthonormal with respect to the standard Hilbert-Schmidt inner product, \(R = \frac{\sqrt{2} \pi}{\sqrt{d^2-1}}\) is a suitable choice.
For the product-of-exponentials ansatz, one needs to manually compute the singular points and choose the radius accordingly.

Let us illustrate the composite ansatz construction on the toy example of a single qubit from earlier.
Consider the product-of-exponentials ansatz for \(\mathrm{SU}(2)\)
\begin{equation}
    G(\vec{\phi}) = \exp(-i\sigma_X\phi_3)\exp(-i\sigma_Y\phi_2)\exp(-i\sigma_Z\phi_1).
\end{equation}
The Lie algebra elements \(\Omega_j\) can be calculated to be
\begin{align}
    & \Omega_1= -i\sigma_Z \\
    & \Omega_2 = -i\cos(2\phi_1)\sigma_Y + i \sin(2\phi_1)\sigma_X \\
    & \Omega_3 = -i\cos(2\phi_1)\cos(2\phi_2)\sigma_X \\
    & \hphantom{\Omega_3 = } -i\sin(2\phi_1)\cos(2\phi_2)\sigma_Y + i\sin(2\phi_2)\sigma_Z \nonumber
\end{align}
In order to check whether these Lie algebra elements are linearly independent, we compute the determinant
\begin{align}
    &\begin{vmatrix}
        0 & i\sin(2\phi_1) & -i \cos(2\phi_1)\cos(2\phi_2) \\
        0 & -i\cos(2\phi_1) & -i \sin(2\phi_1)\cos(2\phi_2) \\
        -i & 0 & i \sin(2\phi_2)
    \end{vmatrix} \\
    &= -\cos(2\phi_2) \nonumber.
\end{align}
Thus, as long as \(\phi_2 < \frac{\pi}{4}\), the $\Omega_{j}$'s span \(\mathfrak{su}(2)\) 
We conclude that for a single qubit the parameterized unitary transformation, 
\begin{align}
    U(\vec{\theta}, \vec{\phi}) = &\exp(-i\sigma_X\theta_3)\exp(-i\sigma_Y\theta_2)\exp(-i\sigma_Z\theta_1) \nonumber\\
    & \cdot \exp\left(-\frac{i}{2}\sigma_X\arctan(\phi_3)\right) \nonumber\\
    & \cdot \exp\left(-\frac{i}{2}\sigma_Y\arctan(\phi_2)\right) \\
    & \cdot \exp\left(-\frac{i}{2}\sigma_Z\arctan(\phi_1)\right), \nonumber
\end{align}
is locally surjective. This construction effectively doubles the number of necessary parameters. It is natural to ask whether it is possible to find locally surjective parameterizations with even fewer parameters.
In the literature it is claimed that in particular product-of-exponentials ans\"atze with exactly \(d^2-1\) parameters always lead to singular
points due to the semisimplicity of $SU(d)$ \cite{altafini_use_2002}.
However, no rigorous proofs is known 
to us. Nevertheless, we expect that at least \(d^2\) parameters are required to avoid singular points.
Indeed, one can construct a locally surjective parameterized unitary transformation with exactly \(d^2\) parameters using the \textit{Cayley transform}
\begin{align}
\label{eq:Cayley}
    \mathrm{cay}: \mathfrak{u}(d) &\rightarrow \mathrm{U}^*(d) \\
    X &\mapsto (\mathds{1}-X)^{-1}(\mathds{1} + X),
\end{align}
where we denote by \(\mathrm{U}^*(d)\) the set of unitaries \(G\), s.t \(\det(G + \mathds{1}) \neq 0\) and \(\mathfrak{u}(d)\) is the unitary algebra of all skew Hermitian matrices. It is a diffeomorphism onto its image, implying that there are no singular points on the entire parameter space \cite{quillen_superconnection_1988}. 
Even though its image is a proper subset of the unitary group, the Cayley transform can reach any unitary up to an irrelevant global phase factor. Also, it can be modified such that its image becomes \(\mathrm{SU}(d)\) instead (see \Cref{sec:modified_cayley}). Hence, we indeed have found a locally surjective unitary transformation that only requires \(d^2\) parameters. 

\begin{corollary*}
    Let \(X_j\), \(j=1,...,d^2\) be a basis of \(\mathfrak{u}(d)\) and \(X(\vec{\theta}) = \sum_{j=1}^{d^2}\theta_j X_j\) for \(\vec{\theta}\in \mathds{R}^{d^2}\). Then the parameterized unitary transformation 
    \begin{align}
    \label{eq:CayleyAnsatz}
        U(\vec{\theta})  = \mathrm{cay}(X(\vec{\theta}))
    \end{align}
    is locally surjective where $\text{cay}(\cdot)$ denotes the Cayley transform defined in \eqref{eq:Cayley}. 
\end{corollary*}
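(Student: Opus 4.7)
The plan is to explicitly compute the Lie-algebra elements $\Omega_j(\vec{\theta})$ for the Cayley ansatz and show they are the image of the generators $X_j$ under an invertible linear self-map of $\mathfrak{u}(d)$, hence form a basis of $\mathfrak{u}(d)$ at every parameter value. Since the cost \eqref{eq:VQE_cost} is phase-invariant, spanning $\mathfrak{u}(d)$ immediately gives spanning of $\mathfrak{su}(d)$ after modding out the irrelevant $i\mathds{1}$-direction.

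First I would check that \eqref{eq:CayleyAnsatz} is globally well-defined: because $X(\vec{\theta})\in\mathfrak{u}(d)$ is skew-Hermitian, its spectrum lies on the imaginary axis, so both $\mathds{1}-X(\vec{\theta})$ and $\mathds{1}+X(\vec{\theta})$ are invertible for every $\vec{\theta}\in\mathds{R}^{d^2}$, and $U(\vec{\theta})$ is smooth on all of parameter space. Next I would compute the differential via the identity
\[
\mathrm{cay}(X) \;=\; 2(\mathds{1}-X)^{-1}-\mathds{1},
\]
which, together with the standard formula $D[(\mathds{1}-X)^{-1}][Y]=(\mathds{1}-X)^{-1}Y(\mathds{1}-X)^{-1}$, gives $D\mathrm{cay}(X)[Y]=2(\mathds{1}-X)^{-1}Y(\mathds{1}-X)^{-1}$. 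Using $X^\dagger=-X$ one has $U(\vec{\theta})^\dagger=(\mathds{1}-X)(\mathds{1}+X)^{-1}$; inserting this into $\Omega_j=U^\dagger\partial_{\theta_j}U$ and exploiting that $\mathds{1}-X$ commutes with $(\mathds{1}+X)^{-1}$, the product telescopes to
\[
\Omega_j(\vec{\theta}) \;=\; 2\bigl(\mathds{1}+X(\vec{\theta})\bigr)^{-1} X_j \bigl(\mathds{1}-X(\vec{\theta})\bigr)^{-1}.
\]

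The core step is to identify the linear map $L_{\vec{\theta}}:Y\mapsto 2(\mathds{1}+X(\vec{\theta}))^{-1}Y(\mathds{1}-X(\vec{\theta}))^{-1}$ as an automorphism of $\mathfrak{u}(d)$. Two observations do the job: regarded as a map on all $d\times d$ matrices, $L_{\vec{\theta}}$ is the composition of left-multiplication by the invertible matrix $2(\mathds{1}+X)^{-1}$ and right-multiplication by the invertible matrix $(\mathds{1}-X)^{-1}$, hence bijective; and $L_{\vec{\theta}}$ sends $\mathfrak{u}(d)$ into itself, since a direct computation using $X^\dagger=-X$ gives $L_{\vec{\theta}}(Y)^\dagger=-L_{\vec{\theta}}(Y)$ whenever $Y^\dagger=-Y$. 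An injective linear self-map of the finite-dimensional space $\mathfrak{u}(d)$ is automatically a bijection, so $L_{\vec{\theta}}$ restricts to a linear automorphism of $\mathfrak{u}(d)$.

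Finally I would conclude: since $\{X_j\}_{j=1}^{d^2}$ is a basis of $\mathfrak{u}(d)$, its image $\{\Omega_j(\vec{\theta})\}=\{L_{\vec{\theta}}(X_j)\}$ is again a basis of $\mathfrak{u}(d)$ for every $\vec{\theta}$, and in particular its traceless projection spans $\mathfrak{su}(d)$, establishing local surjectivity in the sense of \Cref{def:local_surjectivity}. I do not expect a substantive obstacle here: every step is either standard matrix calculus (differentiating a resolvent), an elementary skew-Hermiticity check, or finite-dimensional linear algebra. The only subtlety worth flagging in the write-up is the discrepancy between $\mathfrak{u}(d)$ and $\mathfrak{su}(d)$ arising from the extra $d^2$-th parameter; this is harmless because that parameter merely generates the global $U(1)$-phase which the VQE cost is blind to.
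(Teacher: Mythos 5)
Your proposal is correct, and it rests on the same underlying mechanism as the paper---the differential of the Cayley transform is invertible at every parameter value---but it gets there by a self-contained computation rather than by citation. The paper's own argument is essentially an appeal to the fact that \(\mathrm{cay}\) is a diffeomorphism onto \(\mathrm{U}^*(d)\) (cited to Quillen), so no singular points exist, with the differential formula quoted from Hairer et al.\ in the Methods only for the boundedness discussion. You instead write \(\mathrm{cay}(X)=2(\mathds{1}-X)^{-1}-\mathds{1}\), differentiate the resolvent, and exhibit \(\Omega_j(\vec{\theta})=L_{\vec{\theta}}(X_j)\) with \(L_{\vec{\theta}}(Y)=2(\mathds{1}+X(\vec{\theta}))^{-1}Y(\mathds{1}-X(\vec{\theta}))^{-1}\) an explicit linear automorphism of \(\mathfrak{u}(d)\), so the \(\Omega_j\) form a basis of \(\mathfrak{u}(d)\) for every \(\vec{\theta}\); this buys an elementary, citation-free proof. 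Two remarks. First, your expression for \(\Omega_j\) is the correct left-trivialized derivative (it is skew-Hermitian, as it must be), whereas the formula \(2(\mathds{1}+X)^{-1}X_j(\mathds{1}+X)^{-1}\) printed in the paper's Methods comes from treating the right-trivialized differential of the cited lemma as the plain partial derivative and is not skew-Hermitian; the uniform-boundedness argument given there is unaffected, but your version is the one that should appear. Second, the \(\mathfrak{u}(d)\)-versus-\(\mathfrak{su}(d)\) point you flag is handled exactly as the paper intends: the extra direction is \(i\mathds{1}\), the Riemannian gradient \([H,\ket{\psi}\bra{\psi}]\) is traceless, so spanning \(\mathfrak{u}(d)\) gives the submersion/critical-point property that \Cref{def:local_surjectivity} is meant to capture, and there is no gap.
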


We expect that the parameterized quantum circuit \eqref{eq:CayleyAnsatz} that utilizes the Cayley transform can be implemented on quantum computers through quantum signal processing and block encoding strategies \cite{PhysRevLett.118.010501}.  

\section{Discussion}
In this work we assumed that the Riemannian gradient can a priori point in any direction.
This is however not the case, as the cost function only depends on the state \(\ket{\psi(\vec{\theta})} = U(\vec{\theta})\ket{\psi_0}\) rather than the unitary itself.
Hence in principle, at any state \(\ket{\psi}\) we can factor out the stabilizer subgroup, i.e., the group of special unitaries that leave \(\ket{\psi}\) invariant.
For example, we can restrict the exponential map onto the orthogonal complement in \(\mathfrak{su}(d)\) of the commutant of the projector onto the initial state \(\ket{\psi_0}\bra{\psi_0}\).
The resulting manifold only has a dimension of \(2(d-1)\) instead of \(d^2-1\).

The problem could further admit additional symmetries, which constrain the Riemannian gradient to a smaller subalgebra \(\mathfrak{g}\).
In this case the optimization could be carried out on the corresponding subgroup \(G\), which reduces the dimension of the problem even further.
This is interesting since the number of parameters of a locally surjective parameterized unitary transformation necessarily is at least as large as the dimension of the Lie algebra over which the Riemannian gradient has support.
In the most general case this would be \(\mathfrak{su}(2^n)\), where \(n\) is the number of qubits employed in the \gls{VQA}, hence its dimension scales exponentially.
One might find problems where the Riemannian gradient only has support in a polynomially sized Lie algebra, in which case it might be better to only parameterize the corresponding smaller Lie group.
We expect that the family of composite ansätze can in principle be generalized to different subgroups of lower dimension, which would lead to a reduction in the number of optimization variables needed for the parameterization.
Recent work indicates that problems with a polynomially sized Lie algebra can be efficiently solved on classical computers as well \cite{goh_lie-algebraic_2025}.

However, for some parameterizations that satisfy local surjectivity, gradient descent may not converge at all. To demonstrate this issue, let \(U(\vec{\theta})\) be the \(\mathbb{SU}(d)\)-gate ansatz and consider for some very small radius \(R\) the parameterization \(U(\frac{2R}{\pi}\arctan(\vec{\theta}))\).
While this parameterization satisfies all the conditions of the theorem, only unitaries that are sufficiently close to the identity can be reached. Consequently, if the global optimum lies outside this region, the gradient algorithm must escape to infinity.

This problem is not only restricted to parameterized unitary transformations that cannot reach the global optimum but can also occurs when the parameterization is surjective.
We start by explicitly illustrating this behavior on the simpler manifold \(\mathrm{U}(1)\), i.e. the unit circle, considering the parameterization $ U(\theta) = e^{i \left( \frac{5}{2} \arctan(\theta) + \frac{\pi}{4}\right)}$ (see Figure \ref{fig:escape_to_infinity} for visualization).
The negative half of the real line is mapped such that it gets arbitrarily close to the point \(-1\), but never reaches it.
The positive half however is wrapped around the circle up to the point \(-i\), so it reaches \(-1\) and overlaps with the part that is also reached by the negative half.
This parameterization fulfills all the conditions of the main theorem.
However, if one constructs a cost function such that \(1\) is the optimum (e.g. \(J(z) = \operatorname{Re}(z)\)) and initializes the gradient descent at a sufficiently large positive value of \(\theta\), it will run off to positive infinity, instead of going around the circle to reach the global optimum.

More generally, this behavior may also occur with a composite ansatz. Consider the case where the surjective part \(W\) reaches a singular point.
As the \(V\) part is locally surjective by construction, we can still move around the special unitary group in any direction. But we cannot guarantee that we will ever hit a point at which the Riemannian gradient regains support over the derivatives of \(W\), meaning that the parameters of \(W\) might never get updated again. If this does happen, the unitary effectively becomes constraint to a small region in the manifold around \(W(\vec{\theta})\) and the parameters of \(V\) must escape to infinity when the optimum lies outside that region.

If the parameterization was periodic with respect to all parameters, one could argue that the gradient descent effectively never leaves one of the periods and hence, cannot run off to infinity.
In the \(U(1)\) example discussed above, such a periodic parameterization is given by $U(\theta) = e^{i \theta}$.
The corresponding euclidean cost function looks like a sine function, for which  gradient descent surely finds a global optimum when the step size is chosen small enough. The product-of-exponentials ansatz can in principle
exhibit periodicity for a suitably chosen basis, such as the Pauli basis. However it remains unknown if it ever becomes locally surjective.

Convergence of gradient descent can be ensured when the cost function $J$ exhibits compact sublevel sets, i.e., the sets \(\{\vec{\theta} \in \mathds{R}^M \mid J(\vec{\theta}) \leq C\}\) for all $C$ should be compact.
However the VQE cost function given by the expectation value of some finite dimensional Hamiltonian \eqref{eq:VQE_cost} cannot have this property as it is bounded.
One can in principle try to enforce the compact sublevel set property by regularizing the cost function.
For example, one could add a penalty term
\begin{equation}
    J'(\vec{\theta}) = J(\vec{\theta}) + \lambda \lVert \vec{\theta} \rVert_\alpha^\alpha
\end{equation}
with regularization constant \(\lambda\) and order \(\alpha\).
In the 
\begin{figure}[H]
    \centering
    \includegraphics{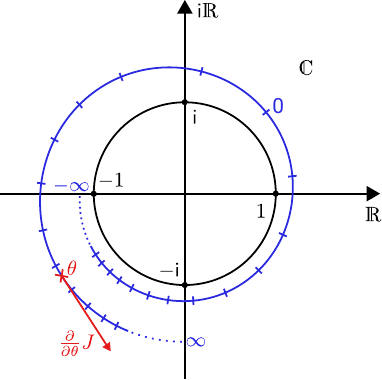}
    \caption{Example of a gradient algorithm escaping to infinity for an inappropriate choice of parameterization.
    The optimization is done on the complex unit circle \(\mathrm{U}(1)\) (black).
    The real line of parameters (blue) is wrapped around the circle such that the negative half extends up to the point \(-1\), but never reaches it. The positive half is wrapped the other way around until \(- i\) so that it overlaps with the negative half in the third quadrant of the circle.
    The increasing radius of the blue spiral only serves visibility purposes.
    When a gradient search with the target \(1\) is initialized at a point \(\theta\) that is mapped to the third quadrant of the circle, the gradient (red) will always point into the positive direction, driving the parameter towards positive infinity, without the image point on the circle ever reaching \(1\).}
    \label{fig:escape_to_infinity}
\end{figure}
\noindent machine learing community this strategy is known for \(\alpha = 1\) as L1-regularization while for \(\alpha = 2\) as L2-regularization  \cite{moradi_survey_2020}.

However, regularizing the cost function comes with its own challenges.
Regardless of how small the regularization constant \(\lambda\) is chosen, it cannot be guaranteed anymore  that the L2-regularized cost function \(J'\) is still free of local minima. 
The reason for this is that the negative eigenvalues of the Euclidean Hessian at the critical points can in principle become arbitrarily close to zero.
Any additional positive curvature introduced by the L2-regularization penalty can transform the strict saddles of \(J\) to a local minimum of \(J'\).
Moreover, adding the penalty term also shifts the position of the optimum itself.
The critical points \(\vec{\theta}^*\) of \(J'\) now satisfy \(\nabla J (\vec{\theta}^*) = -\lambda \lVert\vec{\theta}\rVert_2^{-1} \vec{\theta}\) or \(\nabla J (\vec{\theta}^*) = - 2 \lambda \vec{\theta}\) for L1- and L2-regularization, respectively.
It is a priori unclear how far these points, and more importantly the respective states \(\ket{\psi(\vec{\theta^*})}\) and unitaries \(U(\vec{\theta}^*)\), are from their corresponding counterparts of \(J\).

In general, it is possible that the termination of the gradient descent and convergence to the global optimum cannot both be guaranteed at the same time as  some \gls{VQE} instances may not be computable.
It was recently shown that the decision problem associated to a \gls{VQE} with the product-of-exponentials ansatz and commuting generators is undecidable if a certain conjecture on the existence of solutions to a system of polynomial equations holds true \cite{korpas_undecidable_2025}. In the context of this work this means that we would not be able to decide the halting problem for general \gls{VQE} instances.
It may therefore be necessary to make further assumptions on either the parameterized unitary transformation, or the problem Hamiltonian, to achieve guarantees for termination of the VQE and convergence to an optimal solution.
\\[1em]
\indent \textbf{Concluding remarks:} Although \glspl{VQA} have been studied quite extensively in recent times, their convergence properties still remain relatively poorly understood. The results presented in this work provide a first step towards establishing solution guarantees for VQAs. We showed that under the assumption of local surjectivity, a variational quantum eigensolver updated by gradient descent almost never gets stuck at a suboptimal solution.
Instead, the VQE either escapes to infinity or converges towards a global optimum that corresponds to a ground state of a Hamiltonian. 
The assumption of local surjectivity is central to avoid the existence of local minima in the optimization landscape.
Missing directions in the tangent space can lead to the formation of local optima where a gradient-based optimization algorithm might get stuck.

While satisfying local surjectivity does indeed prevent this from happening, it is not an ``easy'' condition to fulfill.
We showed that many of the commonly used parameterized quantum circuit ans\"atze are not locally surjective. In light of these challenges we proposed two alternatives, the family of composite ansätze (Eq.  \eqref{eq:composite_ansätze}) and the Cayley transform (Eq.\eqref{eq:CayleyAnsatz}) that do satisfy local surjectivity. The composite ansätze consist of two parts --- one ensures that local surjectivity is satisfied while the other ensures reachability of all unitaries.
One can build a composite ansatz for example from a \(\mathbb{SU}(d)\)-gate ansatz and a generalized Euler angle parameterization, which can be implemented on quantum hardware \cite{wiersema_here_2024, tilma_generalized_2002}.
The Cayley transform on the other hand cuts the number of required parameters almost in half. The implementation of such parameterized unitary transformation on a quantum device through quantum signal processing techniques is left for future study.

\end{multicols}

\section{Methods}
\label{sec:Methods}

\subsection{Useful results from classical optimization}
\label{sec:classical_optimization}
The convergence properties of gradient descent algorithms have been widely studied in the classical optimization literature. In particular, a lot is known in the case where the cost function is real analytic, which is usually the case for the cost functions we encounter in \glspl{VQA}.
One can show that as long as the step size is chosen so that the descent fulfills the so called \textit{Wolfe conditions}, the gradient search either diverges to infinity or converges to a critical point \(\vec{\theta}^*\) at which the gradient vanishes (cf. Theorem 4.1 in \cite{absil_convergence_2005}),
\begin{align}
    &\lim_{k \rightarrow \infty} \lVert \vec{\theta}_k \rVert = \infty \text{ or } \nonumber \\
    &\lim_{k \rightarrow \infty} \vec{\theta}_k = \vec{\theta}^* , \, \vec{\theta}^* \in \left\{\vec{\theta} \in \mathds{R}^M \mid \nabla J (\vec{\theta}^*) = 0 \right\}.
\end{align}

Such critical points can either be local or global extrema, or saddle points, depending on the eigenvalues of the Hessian of the cost function.
A critical point \(\vec{\theta}^*\) is a strict saddle if the Hessian \(\nabla^2 J(\vec{\theta}^*)\) at this point has at least one negative eigenvalue, i.e. the smallest eigenvalue of the Hessian \(\lambda_\text{min}\) satisfies \(\lambda_\text{min} < 0\) \cite{lee_first-order_2019}.

While the optimizer can in principle get stuck at a strict saddle point, it can be proven from the stable center manifold theorem under the assumption of Lipschitz continuity of the gradient that almost no initialization of gradient descent will actually do so.

\begin{lemma}[cf. Theorem 4 in \cite{pmlr-v49-lee16} and Corollary 2 in \cite{lee_first-order_2019}]
    \label{lem:strict_saddles_are_avoided}
    Let \(J:\mathds{R}^M\to\mathds{R}\) be a twice continuously differentiable function and let \(\mathcal{X}^* = \{ \vec{\theta}^* \in \mathds{R}^M \mid \vec{\theta}^* \text{ is a strict saddle point}\}\) be the set of all strict saddles. Assume that the step size \(\gamma\) fulfills \(0 < \gamma < \frac{1}{L}\), where \(L\) is the Lipschitz constant satisfying \(\lVert \nabla J (\vec{\theta}_1) - \nabla J (\vec{\theta}_2) \rVert_2 \leq L \lVert \vec{\theta}_1 - \vec{\theta}_2 \rVert_2\) for all \(\vec{\theta}_1, \vec{\theta}_2 \in \mathds{R}^M\). Then
    \begin{equation}
        \mathrm{Pr}(\lim_{k \rightarrow \infty} \vec{\theta}_k \in \mathcal{X}^*) = 0
    \end{equation}
    for random initial points \(\vec{\theta}_0\), where the probability is to be taken with respect to an arbitrary measure that is absolutely continuous with respect to the Lebesgue measure.
\end{lemma}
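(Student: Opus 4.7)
The plan is to view the gradient descent iteration as a discrete-time dynamical system generated by the map $g(\vec{\theta}) := \vec{\theta} - \gamma \nabla J(\vec{\theta})$ and to apply the center-stable manifold theorem at each strict saddle. The goal is to show that the set of initial conditions whose orbits converge to a strict saddle lies in a $C^1$ submanifold of positive codimension; since such a submanifold is Lebesgue-null and the event $\{\lim_k \vec{\theta}_k \in \mathcal{X}^*\}$ will be covered by a countable union of these sets, absolute continuity of the initial law will then force the probability to vanish.

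The first step is to verify that, under $\gamma < 1/L$, the map $g$ is a $C^1$ diffeomorphism of $\mathbb{R}^M$. Lipschitz continuity of $\nabla J$ with constant $L$ gives $\|\nabla^2 J(\vec{\theta})\|_{\mathrm{op}} \le L$, so every eigenvalue of $Dg(\vec{\theta}) = I - \gamma \nabla^2 J(\vec{\theta})$ lies in $(0,2)$ and $Dg$ is everywhere invertible. Global injectivity follows from the bi-Lipschitz estimate $\|g(\vec{\theta}_1) - g(\vec{\theta}_2)\|_2 \ge (1 - \gamma L)\|\vec{\theta}_1 - \vec{\theta}_2\|_2$, after which a standard Hadamard-type argument together with the properness of $g$ yields surjectivity and hence a $C^1$ inverse.

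The second step is to analyze $g$ at an arbitrary strict saddle $\vec{\theta}^* \in \mathcal{X}^*$. Since $\nabla^2 J(\vec{\theta}^*)$ has at least one negative eigenvalue $\lambda_{\min} < 0$, the matrix $Dg(\vec{\theta}^*)$ has a corresponding eigenvalue $1 - \gamma \lambda_{\min} > 1$, giving a nontrivial unstable direction. The center-stable manifold theorem then supplies a local $g$-invariant $C^1$ submanifold $W^{cs}_{\mathrm{loc}}(\vec{\theta}^*)$ of dimension at most $M-1$ that captures every orbit remaining in a small neighborhood of $\vec{\theta}^*$. Any orbit converging to $\vec{\theta}^*$ must eventually stay in this neighborhood, so it belongs to $\bigcup_{k \ge 0} g^{-k}\bigl(W^{cs}_{\mathrm{loc}}(\vec{\theta}^*)\bigr)$, a countable union of Lebesgue-null sets.

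The principal obstacle, and the step that typically demands the most care, is that $\mathcal{X}^*$ can be uncountable, so one cannot naively take a union over all strict saddles. I would address this via second countability of $\mathbb{R}^M$: cover the parameter space by countably many compact balls, use compactness together with the uniform radii supplied by the local center-stable manifold construction to cover each ball by finitely many charts, and then take a countable union of the associated $g^{-k}$-preimages. The resulting set still has Lebesgue measure zero, and absolute continuity of the initializing distribution with respect to Lebesgue measure completes the argument.
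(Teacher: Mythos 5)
The paper does not prove this lemma itself---it imports it verbatim from Theorem 4 of \cite{pmlr-v49-lee16} and Corollary 2 of \cite{lee_first-order_2019}---and your proposal correctly reconstructs exactly the argument of those references: for $0<\gamma<1/L$ the map $g(\vec{\theta})=\vec{\theta}-\gamma\nabla J(\vec{\theta})$ is a global $C^1$ diffeomorphism, the center-stable manifold theorem at each strict saddle yields a positive-codimension local invariant manifold containing every orbit that remains in a suitable neighborhood, and second countability handles the possibly uncountable set $\mathcal{X}^*$, so the set of initializations converging into $\mathcal{X}^*$ is a countable union of $g^{-k}$-preimages of Lebesgue-null sets and is hence null. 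One minor correction to your last step: you neither get nor need \emph{finitely} many charts per compact ball (the intersection of $\mathcal{X}^*$ with a closed ball need not be compact, since $\mathcal{X}^*$ need not be closed); applying the Lindel\"of property directly to the open cover of $\mathcal{X}^*$ by the center-stable-manifold neighborhoods gives a countable subcover, which is all the measure-zero argument requires.
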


This gives a roadmap to prove the main theorem: establish that the cost function only has global extrema and strict saddle points, then use the above results to guarantee that the strict saddles are avoided and the algorithm can only converge to a global optimizer or escape to infinity.

\subsection{Details on the proof of the main theorem}
\label{sec:proof}
We first give an outline of the proof.
The proof starts by establishing that the Euclidean gradient only vanishes under the assumption of local surjectivity if the Riemannian gradient vanishes.
Next, we analyze the Hessian to show that the Riemannian gradient only vanishes at global optima and strict saddles.
Then, we relate the Riemannian Hessian to the Hessian in the Euclidean parameter space to show that the same also holds there under the assumption of local surjectivity. 
Finally, this enables us to make use of the results in \cref{sec:classical_optimization} to argue that the algorithm either runs off to infinity or converges to a ground state almost surely.

We start by considering a gradient component
\begin{equation}
    \label{eq:cost_function_gradient}
    \frac{\partial}{\partial \theta_j} J(\vec{\theta}) = \langle \operatorname{grad}J[U(\vec{\theta})], \Omega_j(\vec{\theta}) \rangle,
\end{equation}
where the scalar product is meant to be the Hilbert-Schmidt inner product.
The Riemannian gradient of equation \eqref{eq:VQE_cost} is given by \(\operatorname{grad}J[U(\vec{\theta})] = [H, \ket{\psi(\vec{\theta})} \bra{\psi(\vec{\theta})}]\). Under the assumption that the \(\Omega_j(\vec{\theta})\)'s span \(\mathfrak{su}(d)\), the Euclidean gradient \(\nabla J(\vec{\theta})\) can only be zero at points \(\vec{\theta}^*\) where the Riemannian gradient \(\operatorname{grad}J[U(\vec{\theta}^*)]\) vanishes (\(\operatorname{grad} J[U(\vec{\theta})] \in \mathfrak{su}(d)\) since \(J\) is phase invariant). Consequently, the states \(\ket{\psi(\vec{\theta}^*)}\) created at the critical points need to satisfy \([H, \ket{\psi(\vec{\theta^*})} \bra{\psi(\vec{\theta^*})}] = 0\). However, since \(H\) is diagonalizable, a projector commutes with \(H\) if and only if it projects into an eigenspace of \(H\). Therefore, the states \(\ket{\psi(\vec{\theta}^*)}\) must be eigenstates of \(H\).

We proceed by calculating the Hessian elements at the critical points
\begin{equation}
    \frac{\partial^2}{\partial \theta_i \partial \theta_j} J(\vec{\theta}^*) = - \mathrm{Tr}\{[\Omega_j, [\Omega_i, \ket{\psi(\vec{\theta}^*)} \bra{\psi(\vec{\theta}^*)}]] H\}
\end{equation}
omitting here the explicit dependence of the \(\Omega_j\)'s on \(\vec{\theta}^*\).
In order to determine the type of critical point (that is, the eigenvalue structure of the Hessian at a critical point), we parameterize \(U(\vec{\theta}) = U\) in the neighborhood of \(U(\vec{\theta}^*)\) by \(U = e^{\vec{x} \cdot \vec{B}} U(\vec{\theta}^*)\). Here we use the short hand notation \(\vec{x} \cdot \vec{B} = \sum_{i=1}^{d^2} x_i B_i\) where \(\{B_i \mid i=1,...,d^2\}\) is a complete and orthonormal basis for \(\mathfrak{u}(d)\). In the neighborhood of \(U(\vec{\theta}^*)\), the cost function is given by
\begin{align}
    J_{\{B_i\}}(\vec{x}) &= \mathrm{Tr}\{e^{\vec{x} \cdot \vec{B}} \ket{\psi(\vec{\theta}^*)}\bra{\psi(\vec{\theta}^*)}e^{-\vec{x}\cdot\vec{B}}H\} \\
    & = J(\vec{\theta}^*) + \mathrm{Tr}\{(\vec{x}\cdot\vec{B})^2\ket{\psi(\vec{\theta}^*)}\bra{\psi(\vec{\theta}^*)} H\} -\mathrm{Tr}\{(\vec{x}\cdot\vec{B})\ket{\psi(\vec{\theta}^*)}\bra{\psi(\vec{\theta}^*)}(\vec{x}\cdot\vec{B})H\} + \mathcal{O}(\lVert x \rVert_2^3), \nonumber
\end{align}
where we used that the gradient vanishes at a critical point and the subscript \(\{B_i\}\) indicates the basis used. Now, taking \(\{B_i\} = \{d^{-\frac{1}{2}}\mathds{1},\lambda_s, \lambda_{k,l}, \bar{\lambda}_{k,l}\}, 1 \leq s \leq d-1, 1 \leq k < l \leq d\), where
\begin{align}
    &\lambda_s = \frac{\mathrm{i}}{\sqrt{s(s+1)}}\left(\sum_{r=1}^s \ket{E_r}\bra{E_r} -s\ket{E_{s+1}}\bra{E_{s+1}}\right) \\
    & \lambda_{k,l} = \frac{1}{\sqrt{2}} \mathrm{i}(\ket{E_k}\bra{E_l} + \ket{E_l}\bra{E_k}), \\
    & \bar{\lambda}_{k, l} = \frac{1}{\sqrt{2}}(\ket{E_k}\bra{E_l} - \ket{E_l}\bra{E_k})
\end{align}
are the generalized Gell-Mann matrices and \(\ket{E_j}\) are eigenstates of \(H\) corresponding to the eigenenergy \(E_j\) for \(j=1,...,d\). W.l.o.g. we can assume that \(\ket{\psi(\vec{\theta^*})} = \ket{E_p}\) for some index \(p\).
We find (see also Theorem IV.2 in \cite{wang_analysis_2010}) that the Hessian \(\nabla^2J_{\{d^{-\frac{1}{2}}\mathds{1},\lambda_s,\lambda_{k,l}, \bar{\lambda}_{k,l}\}}(0)\) at a critical point \(\vec{x}=0\) is diagonal.
Furthermore, the first \(d\) diagonal entries corresponding to the identity and \(\lambda_s\) vanish. The remaining \(d^2-d\) Hessian diagonal elements in the subspace spanned by \(\{\lambda_{k,l}, \bar{\lambda}_{k, l}\}\), read
\begin{align}
\label{eq:diagonal_hessian}
\frac{\partial^2}{\partial x_i^2} J_{\{\lambda_{k,l}, \bar{\lambda}_{k,l}\}}(0) & = 2(\mathrm{Tr}\{B_i^2\ket{\psi(\vec{\theta}^*)}\bra{\psi(\vec{\theta}^*)} H\} - \mathrm{Tr}\{B_i \ket{\psi(\vec{\theta}^*)} \bra{\psi(\vec{\theta}^*)} B_i H\}) \nonumber \\
    &= \frac{1}{2}(E_k - E_l)(\lvert \braket{\psi(\vec{\theta}^*) | E_l}\rvert^2 - \lvert \braket{\psi(\vec{\theta}^*) | E_k}\rvert^2),
\end{align}
for some \(B_i \in \{\lambda_{k, l}, \bar{\lambda}_{k,l}\}\). Now if we assign the eigenvalue \(E_1\) to the ground state, i.e. \(E_1 \leq E_2 \leq \cdots \leq E_d\), the critical points \(\vec{\theta}^*_\mathrm{min}\) that correspond to a ground state constitute Hessian eigenvalues that are either 0 or positive, i.e.
\begin{align}
    \frac{\partial^2}{\partial x_i^2} J_{\{\lambda_{k,l}, \bar{\lambda}_{k,l}\}}(0) & = \frac{1}{2}(E_k - E_l)(\lvert \braket{\psi(\vec{\theta}^*_\mathrm{min}) | E_l}\rvert^2 - \lvert \braket{\psi(\vec{\theta}^*_\mathrm{min}) | E_k}\rvert^2) \nonumber \\
    & = \begin{cases}
        0, & \text{for } E_k = E_l = E_1 \\
        \frac{1}{2}(E_l - E_1) > 0 & \text{for } E_l > E_k = E_1 \\
        0 & \text{for } E_k > E_1.
    \end{cases}
\end{align}
Conversely, if we assign \(E_d\) to the largest eigenvalue of \(H\), the corresponding Hessian eigenvalues are either 0 or negative. As long as \(E_d > E_1\), all intermediate cases have at least one negative Hessian eigenvalue. In the case of \(E_d = E_1\), \(H\) was a multiple of the identity and therefore any state is a ground state of \(H\). Note that since these arguments hold independently of whether \(B_i \in \{\lambda_{k, l}\}\) or \(B_i \in \{\bar{\lambda}_{k,l}\}\), each Hessian eigenvalue must appear twice. Moreover, in the \(2(d-g)\) dimensional subspace, where \(g\) is the degeneracy of the ground state, spanned by \(\{\lambda_{1, p}, \bar{\lambda}_{1, p} \mid p > g\}\), the Hessian eigenvalues are non-zero at the critical points. This means that within this subspace there are no singular critical points at which the Hessian is not invertible, which shows that the cost function \(J(\vec{x})\) is a Morse function within that subspace.

In order to show that all saddles are strict, i.e. in order to establish our theorem, we need to relate the Hessian \(\nabla^2 J_{\{B_i\}}(0)\) to the Euclidean Hessian \(\nabla^2 J (\vec{\theta}^*)\) at the critical points. In order to do so it suffices to consider only the non-zero subspace spanned by \(\{\lambda_{1, p}, \bar{\lambda}_{1, p} \mid p > g\}\) of \(\nabla^2 J_{\{B_i\}}(0)\), where we denote by \(E_p\) the eigenvalue that corresponds to the saddle point we consider. We go on and expand the operators \(B_i\) in equation \eqref{eq:diagonal_hessian} in the basis \(\{\Omega_j \mid j=1,...,d^2-1\}\) of \(\mathfrak{su}(n)\) to obtain
\begin{align}
    \frac{\partial^2}{\partial x_i^2} J_{\{B_i\}}(0) & = \sum_{n,m=1}^{d^2-1}S_{n, i} S_{m,i} (\mathrm{Tr}\{(\Omega_n \Omega_m + \Omega_m \Omega_n)\ket{\psi(\vec{\theta}^*)} \bra{\psi(\vec{\theta}^*)} H\} \nonumber \\
    & \hphantom{= \sum_{n,m=1}^{d^2-1}} - \mathrm{Tr} \{(\Omega_n \ket{\psi(\vec{\theta}^*)}\bra{\psi(\vec{\theta}^*)}\Omega_m
     + \Omega_m \ket{\psi(\vec{\theta}^*)} \bra{\psi(\vec{\theta}^*)} \Omega_n) H\}) \nonumber \\
     &=  \sum_{n,m=1}^{d^2-1} S_{n,i} S_{m,i} \mathrm{Tr}\{[\Omega_m, [\Omega_n, \ket{\psi(\vec{\theta}^*)}\bra{\psi(\vec{\theta}^*)}]]H\} \\
    &= \sum_{n, m=1}^{d^2-1} S_{n,i} S_{m,i} \frac{\partial^2}{\partial\theta_n\partial\theta_m}J(\vec{\theta}^*), \nonumber
\end{align}
where the coefficients are given by the inner product \(S_{n,i}=\mathrm{Tr}\{B_i^\dagger \Omega_n\}\).
Thus, the Hessians at the critical points are related through the transformation
\begin{equation}
    \label{eq:basis_change}
    \nabla^2 J_{\{B_i\}}(0) = S^T\nabla^2 J(\vec{\theta}^*) S,
\end{equation}
where \(S_{n,i}\) are the elements of the matrix \(S\).

It remains to show that \(\nabla^2 J (\vec{\theta^*})\) has at least one negative eigenvalue, whenever \(\nabla^2 J_{\{B_i\}}(0)\) does.
Since J depends smoothly on \(\vec{\theta}\), the Hessian \(\nabla^2 J (\vec{\theta^*})\) is symmetric.
Denote by \(\sigma_k(A)\) the k-th eigenvalue of a matrix \(A\) in ascending order.
A generalization of Sylvester's law of inertia to rectangular transformations (see Theorem 3.2 in \cite{higham_modifying_1998}) states that
\begin{equation}
    \sigma_1(\nabla^2 J_{\{B_i\}}(0)) = \gamma_1 \mu_1,
\end{equation}
where \(\sigma_1(S^TS) \leq \gamma_1\) and \(\sigma_1(\nabla^2 J (\vec{\theta^*})) \leq \mu_1\).
Since the \(\Omega_j\) span \(\mathfrak{su}(d)\), we have that \(S\) is full rank and therefore all eigenvalues of \(S^T S\) are strictly positive.
Hence, \(\gamma_1\) is positive and \(\mu_1\) is negative whenever \(\nabla^2 J_{\{B_i\}}(0)\) has at least one negative eigenvalue.
It follows that in that case, \(\nabla^2 J (\vec{\theta^*})\) also has at least one negative eigenvalue, which ensures that the Euclidean optimization landscape also only has global extrema and saddle points.
Since the Lie algebra elements \(\Omega_j(\vec{\theta})\) are uniformly bounded by assumption, so is the Euclidean Hessian and hence the gradient of the cost function is Lipschitz continuous.
Therefore, we can apply \Cref{lem:strict_saddles_are_avoided} to argue that strict saddle points are almost always avoided, which completes the proof.

\section*{Acknowledgements}
The research is part of the Munich Quantum Valley, which is supported by the Bavarian state government with funds from the Hightech Agenda Bavaria Plus. C.A. acknowledges support from the National Science Foundation (NSF Grant No. 2231328).  

\section*{Competing Interests Statement}
There are no competing interests.

\section*{Author contributions}
M.W. constructed the locally surjective ansätze, formalized the results and co-wrote the manuscript. D.B., G.D., T.S.-H. and E.M. provided key ideas, insights and feedback. G.D. constructed the modified Cayley transform. C.A. contributed the initial idea for the project, parts of the main proof and co-wrote the manuscript. The project was jointly supervised by C.A. and D.B.

\printbibliography

\appendix
\section*{Appendix}

\section{Quantum control versus VQAs}
\label{sec:QC_vs_QVAs}

Given a (bilinear) quantum control system
\begin{equation}
\label{eq:bil-sys}
\dot{U}(t) = - {\rm i}H(f(t))U(t)\,, \quad U(0) = I_n
\end{equation}
evolving on the (special) unitary group, where $H(f)$ denotes a control Hamiltonian depending on certain input parameters. 
In many applications, $H(f)$ takes the affine form 
$H(f) := H_0 + \sum_{j=1}^m f_j H_k$ with $H_0$ called drift Hamiltonian
and $H_k$ control Hamiltonians.

Now fix a terminal time $T >0$ and introduce the endpoint map 
$f \mapsto E_T(f) \in SU(n)$ defined by
\begin{equation}
E_T(f) := U_f(T)\,,
\end{equation}
where $U_f(t)$ denotes the unique solution of \eqref{eq:bil-sys} 
corresponding to the control $f:[0,T] \to \mathbb{R}^m$. Here, we assume that $f$ is chosen from an appropriate function space $F$, e.g. $F = L^1([0,T],\mathbb{R}^m)$. Next consider a smooth map $\vec{\theta} \mapsto p(\vec{\theta})$ from $\mathbb{R}^M$ to $F$.
Then the restricted endpoint map $\mathbb{R}^M \to E_T(p(\vec{\theta}))$, which results from restricting the endpoint map to a ``finite dimensional subset'' of the space of control functions, leads to a parametrization
\begin{equation}
U(\vec{\theta}):= E_T(p(\vec{\theta}))\,.
\end{equation}
In this sense quantum control systems give rise to unitary parameterizations and hence can be regarded as a \gls{VQA} \cite{PRXQuantum.2.010101}.

\section{Uniform boundedness of the Lie-algebra elements}
\label{sec:lipschitz}
In order to apply \Cref{lem:strict_saddles_are_avoided}, we need to make sure that the gradient of the cost function is Lipschitz continuous.
We will show in the next section that this is the case when the Lie algebra elements of the parameterization are uniformly bounded, i.e. if for all \(\vec{\theta} \in \mathds{R}^M\) and all \(j=1,...,M\) we have \(\lVert \Omega_j(\vec{\theta}) \rVert \leq C\) for some constant \(C\), which does not depend on \(\vec{\theta}\) or \(j\).
In this section, we show that this is indeed the case for the family of composite ansätze (cf. equation \eqref{eq:composite_ansätze}) and the Cayley transform.

We begin with the family of composite ansätze.
Since by construction, \(\Omega_1(\vec{\phi}), ..., \Omega_{d^2-1}(\vec{\phi})\) do not depend on \(W\) and \(\vec{\theta}\), we can analyze \(V\left(\frac{2R}{\pi} \arctan(\vec{\phi})\right)\) in isolation.
There are two cases: either \(V\) is a product-of-exponentials ansatz, in which case \(\lVert \Omega_j(\vec{\phi})\rVert \leq \lVert X_j \rVert\) in any unitarily invariant norm.
Hence, we can choose \(C = \max_j \lVert X_j \rVert\) and get uniform boundedness.
Or \(V\) is the \(\mathbb{SU}(d)\)-gate ansatz, in which case we can argue that the \(\lVert \Omega_j(\vec{\phi}) \rVert\) is a continuous function, hence it has a finite maximal value on the compact disc with radius \(R\) around the origin.
Since the composite ansatz only evaluates \(V\) inside that disc and the arctangent only contributes a factor smaller than one to the derivative, we again obtain uniform boundedness.
For the remaining \(\Omega_{d^2}(\vec{\theta}, \vec{\phi}), ..., \Omega_{2(d^2-1)}(\vec{\theta}, \vec{\phi})\) just note that they are given by partial derivatives of \(W(\vec{\theta})\) up to some unitary factors.
Since the partial derivatives are assumed to be uniformly bounded per definition, the same will hold for the \(\Omega s\) in an appropriate unitarily invariant norm.

Finally, for the Cayley transform the differential is given by (see Lemma 8.8 in \cite{hairer_geometric_2006})
\begin{align}
    \mathrm{d} \mathrm{cay}_X: \mathfrak{u}(d) &\rightarrow T_{\mathrm{cay}(X)}\mathrm{U}^*(d) \\
    H &\mapsto 2(\mathds{1}-X)^{-1} H (\mathds{1} + X)^{-1},
\end{align}
we get that
\begin{equation}
    \frac{\partial}{\partial \theta_j} \mathrm{cay}(X(\vec{\theta})) = 2(\mathds{1} - X(\vec{\theta}))^{-1}X_j(\mathds{1} + X(\vec{\theta}))^{-1}.
\end{equation}
And hence
\begin{equation}
    \Omega_j(\vec{\theta}) = \mathrm{cay}(X(\vec{\theta}))^\dagger \frac{\partial}{\partial \theta_j} \mathrm{cay}(X(\vec{\theta})) = 2(\mathds{1} + X(\vec{\theta}))^{-1} X_j (\mathds{1} + X(\vec{\theta}))^{-1},
\end{equation}
where we used that \(\mathrm{cay}(X)^\dagger = \mathrm{cay}(-X)\).
Since the spectrum of \(X(\vec{\theta})\) is purely imaginary, we have for all eigenvalues \(\lambda\) of \((\mathds{1} + X(\vec{\theta}))\) that \(\lvert \lambda \rvert \geq 1\).
In particular, \((\mathds{1} + X(\vec{\theta}))\) is diagonalizable.
Hence the operator norm fulfills \(\lVert (\mathds{1} + X(\vec{\theta}))^{-1} \rVert_\infty \leq 1\) and we get uniform boundedness by submultiplicativity of the operator norm.

\section{Relaxing the analyticity requirement of the parameterization}
\label{sec:morse-bott}
In the main Theorem, we require the parameterization to be analytic.
This is necessary so that we can apply Theorem 4.1 of \cite{absil_convergence_2005} to guarantee that the gradient descent converges to a single, critical point.
In general, smoothness of the loss function is not sufficient for this guarantee, see e.g. Section 3.2.1. in \cite{absil_convergence_2005}.
However, for continuous gradient flows on compact manifolds, it is known that for the large class of Morse-Bott functions, convergence to a single point can be guaranteed (see e.g. Appendix A in \cite{marinkovic_symplectic_2022} for a detailed proof).
Here, we provide substantial steps necessary to apply this result to our setting.
Some details are however still missing, so this is not a complete, rigorous proof.

Let us define the notion of a Morse-Bott function (cf. e.g. \cite{banyaga_proof_2004})
\begin{definition}
    Let N be a smooth manifold and \(f: N \to \mathds{R}\) be a smooth function.
    Then \(f\) is said to be Morse-Bott, if
    \begin{enumerate}
        \item The set of critical points of \(f\) is a disjoint union of connected, smooth submanifolds \(C_k\) such that f is constant on each component.
        \item For each \(p \in C_k\) the kernel of the Hessian of \(f\) at \(p\) is exactly the tangent space \(T_pC_k\) of \(p\) in \(C_k\).
    \end{enumerate}
\end{definition}
First, we are going to argue that the cost functional \(J[U]\) is Morse-Bott.
The critical points of \(J[U]\) are exactly the unitaries \(U\) s.t. \(U\ket{\psi_0}\) is an eigenstate of \(H\).
For each eigenspace \(V_k, k=1,...,n_H\) of \(H\), where \(n_H\) denotes the number of distinct eigenvalues of \(H\), denote with \(g_k = \operatorname{dim} V_k\) its degeneracy.
Then we have for the critical points \(\operatorname{Crit }(J[U])\) the following decomposition:
\begin{equation}
    \operatorname{Crit}(J[U])= \bigsqcup_{k=1}^{n_H} \{U \in \mathrm{U}(d) \ | \ U\ket{\psi_0} \in V_k\} \text{\reflectbox{$\coloneq$}} \bigsqcup_{k=1}^{n_H} C_k.
\end{equation}
The map \(\mathrm{U}(d) \to \mathcal{S}^{2d-1} \text{; } U \mapsto U \ket{\psi_0}\), where \(\mathcal{S}^{2d-1}\) is the \(2d-1\)-dimensional unit sphere, is a smooth submersion.
Hence the preimage of \(\mathcal{S}(V_k)\), the unit sphere in \(V_k\), is a smooth submanifold of \(\mathrm{U}(d)\) of dimension \(d^2 - (2d-1 - (2g_k-1)) = d^2-2(d-g_k)\).
But this preimage is exactly \(C_k\), which proves the first of the Morse-Bott conditions.
Moreover, we already saw in \Cref{sec:proof} that the Hessian of \(J[U]\) has exactly \(d^2 - 2(d-g_k)\) zeros at any critical point belonging to \(C_k\).
Hence \(J[U]\) must be Morse-Bott.

Next, we need to argue that as long as the parameterization \(U(\vec{\theta})\) satisfies local surjectivity, then the cost function \(J(\vec{\theta})\) on the Euclidean parameter space is also Morse-Bott.
To see this, note that local surjectivity implies that the parameterization is a (smooth) submersion.
Hence the preimages of the critical submanifolds \(C_k\) are themselves again smooth submanifolds whose dimension increases by exactly \(M-d^2\).
However as a direct consequence of \cref{eq:basis_change} and Theorem 3.2 in \cite{higham_modifying_1998}, the number of non-zero eigenvalues of the Hessian of \(J(\vec{\theta})\) is the same as the one of the Hessian of \(J[U]\), when the parameterization fulfills local surjectivity.
Hence, \(J(\vec{\theta})\) is also Morse-Bott.

For any continuous gradient flow, i.e. solution to the differential equation \(\dot{\vec{\theta}}(t) = -\nabla J(\vec{\theta}(t))\), that does not run off to infinity, we can restrict our attention to some compact subset of \(\mathds{R}^M\).
Then, according to Theorem 2.2 in \cite{marinkovic_symplectic_2022}, the gradient flow must converge to a single, critical point of \(J(\vec{\theta})\).
We expect that a discrete gradient descent algorithm (cf. \cref{eq:gradient_descent}) follows the continuous flow close enough to also converge to a single, critical point if the step size is chosen small enough.

\section{Overparemeterization does not remove singularities of the SU(d) gate ansatz}
\label{sec:overparameterization}
The singular points of the \(\mathbb{SU}(d)\)-gate ansatz are well known and can be studied in detail.
In particular, it is possible to determine which directions in the tangent space are actually missing at these singular points.
This knowledge, captured in the following Lemma, will allow us to argue that the natural strategies for overparameterization cannot fully eliminate singular points in this case.
\begin{lemma}
    \label{lem:singular_points_exp}
    The \(\mathbb{SU}(d)\)-gate ansatz is not locally surjective.
    Its singular points are given by
    \begin{equation}
        \label{eq:singular_points_first_kind}
        \mathrm{sing}(U(\vec{\theta})) = \left\{\vec{\theta} \in \mathds{R}^{d^2-1} \big| \left\{\lambda_1 - \lambda_2 \mid \lambda_1, \lambda_2 \in \mathrm{spec}\left(X(\vec{\theta})\right)\right\} \cap \left(2 \pi i \mathds{Z} \setminus \{0\}\right) \neq \varnothing \right\}.
    \end{equation}
    Furthermore at any singular point \(\vec{\theta}_s \in \mathrm{sing}(U(\vec{\theta}))\), let \(\ket{e_i}\) be an orthonormal eigensystem of \(X(\vec{\theta}_s)\) with corresponding eigenvalues \(\lambda_i\) for \(i=1,...,d^2-1\). We have that
    \begin{equation}
        \mathrm{span}\{\Omega_j(\vec{\theta}_s)\}^\perp = \mathrm{span}\{i(\ket{e_i}\bra{e_j} + \ket{e_j}\bra{e_i}),\ket{e_i}\bra{e_j} - \ket{e_j}\bra{e_i} \mid  \lambda_i - \lambda_j \in 2\pi i \mathds{Z} \setminus \{0\}\},
    \end{equation}
    where the orthogonal complement is taken with respect to \(\mathfrak{su}(d)\).
\end{lemma}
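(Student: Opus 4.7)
The plan is to invoke the classical formula for the differential of the matrix exponential. For $X\in\mathfrak{su}(d)$ one has $\partial_{\theta_j}\exp(X(\vec{\theta})) = \exp(X(\vec{\theta}))\, f(\mathrm{ad}_{X(\vec{\theta})})(X_j)$ with $f(z) = (1-e^{-z})/z$, so that $\Omega_j(\vec{\theta}) = U(\vec{\theta})^\dagger\partial_{\theta_j}U(\vec{\theta}) = f(\mathrm{ad}_{X(\vec{\theta})})(X_j)$. Since $\{X_j\}$ is a basis of $\mathfrak{su}(d)$ and $\mathrm{ad}_X$ preserves $\mathfrak{su}(d)$ for $X\in\mathfrak{su}(d)$, the real-linear span of the $\Omega_j(\vec{\theta})$ coincides with the image of the single operator $f(\mathrm{ad}_{X(\vec{\theta})})\colon \mathfrak{su}(d)\to \mathfrak{su}(d)$. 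The whole lemma therefore reduces to a spectral analysis of this operator and an identification of its adjoint with respect to the restricted HS inner product.

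For the singular-point characterization I would diagonalize $X(\vec{\theta}_s)$ in an orthonormal eigenbasis $\ket{e_i}$ with purely imaginary eigenvalues $\lambda_i$. The matrix units $\ket{e_i}\bra{e_j}$ are a HS-orthogonal eigenbasis of $\mathrm{ad}_{X(\vec{\theta}_s)}$ on $\mathfrak{gl}(d)$ with eigenvalues $\lambda_i-\lambda_j$, so by the holomorphic functional calculus $f(\mathrm{ad}_{X(\vec{\theta}_s)})$ acts on each one by multiplication by $f(\lambda_i-\lambda_j)$. Since $f$ vanishes exactly on $2\pi i\mathbb{Z}\setminus\{0\}$, the kernel on $\mathfrak{gl}(d)$ is the span of the matrix units whose indices satisfy $\lambda_i-\lambda_j\in 2\pi i\mathbb{Z}\setminus\{0\}$, a set which is symmetric under the swap $i\leftrightarrow j$. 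Restricting to the real form $\mathfrak{su}(d)$ via the skew-Hermitian combinations $\ket{e_i}\bra{e_j}-\ket{e_j}\bra{e_i}$ and $i(\ket{e_i}\bra{e_j}+\ket{e_j}\bra{e_i})$ produces a nontrivial kernel iff the spectral condition \eqref{eq:singular_points_first_kind} holds, giving the first claim.

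For the orthogonal-complement statement I would identify the complement of the image with the kernel just computed. The quickest way is to show that the adjoint of $f(\mathrm{ad}_X)$ on $\mathfrak{su}(d)$ (with the real HS inner product) differs from $f(\mathrm{ad}_X)$ only by an invertible factor. A direct cyclicity-of-trace calculation gives $\mathrm{ad}_X^* = -\mathrm{ad}_X$ for $X\in\mathfrak{su}(d)$, so $f(\mathrm{ad}_X)^* = f(-\mathrm{ad}_X)$ because $f$ has real power-series coefficients. The elementary identity $f(-z) = e^z f(z)$ then yields $f(\mathrm{ad}_X)^* = \mathrm{Ad}_{\exp(X)}\,f(\mathrm{ad}_X)$, and since $\mathrm{Ad}_{\exp(X)}=e^{\mathrm{ad}_X}$ is invertible the kernels of $f(\mathrm{ad}_X)$ and of $f(\mathrm{ad}_X)^*$ coincide. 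Therefore the HS-orthogonal complement of $\mathrm{span}\{\Omega_j(\vec{\theta}_s)\}$ in $\mathfrak{su}(d)$ is exactly the kernel exhibited in the previous step, matching the spanning set in the lemma.

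The main obstacle I anticipate is the bookkeeping between the complex spectral picture on $\mathfrak{gl}(d)$ and the real structure of $\mathfrak{su}(d)$: one must verify that the complex-linear eigenspace decomposition of $\mathrm{ad}_{X(\vec{\theta}_s)}$ is compatible with Hermitian conjugation on the relevant index pairs, and that the adjoint is computed with respect to the correct real inner product so that image and kernel truly are orthogonal complements inside $\mathfrak{su}(d)$. Beyond this structural check, every step is a routine consequence of the holomorphic functional calculus applied to the normal operator $\mathrm{ad}_{X(\vec{\theta}_s)}$ together with the known zero set of $f$.
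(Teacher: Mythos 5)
Your proposal is correct and follows essentially the same route as the paper's proof: both rest on the differential-of-the-exponential formula \(\Omega_j(\vec{\theta}) = f(\mathrm{ad}_{X(\vec{\theta})})(X_j)\) with \(f(z)=(1-e^{-z})/z\), the spectral decomposition of \(\mathrm{ad}_{X(\vec{\theta})}\) with eigenvalues \(\lambda_i-\lambda_j\) and zero set \(2\pi i \mathds{Z}\setminus\{0\}\) of \(f\), and the identification of \(\mathrm{span}\{\Omega_j(\vec{\theta}_s)\}^\perp\) with the kernel of the adjoint map. Your identity \(f(-z)=e^{z}f(z)\) together with \(\mathrm{ad}_X^*=-\mathrm{ad}_X\) is simply a cleaner packaging of the paper's explicit trace computation of \(\mathrm{d}U_{\vec{\theta}}^*\), so the two arguments coincide in substance.
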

\begin{proof}
    For the characterization of the singular points, also confer to page 313 of \cite{hilgert_structure_2012}.
    The differential of the exponential map is given by (cf. Theorem 1.7 in \cite{sigurdur_differential_1978})
    \begin{align}
    \label{eq:exp_jac}
        \mathrm{d}U_{\vec{\theta}}: \mathds{R}^{d^2-1} &\rightarrow T_{\exp(X(\vec{\theta}))} \mathrm{SU}(d) \nonumber \\
        \vec{y} & \mapsto \mathrm{d}\left(L_{\exp(X(\vec{\theta}))}\right)_e \circ \frac{1-\exp(-\mathrm{ad}_{X(\vec{\theta})})}{\mathrm{ad}_{X(\vec{\theta})}}( X(\vec{y}))
    \end{align}
    where \(L_{\exp(X(\vec{\theta}))}\) is the left multiplication by \(\exp(X(\vec{\theta}))\) on \(\mathrm{SU}(d)\), \(\mathrm{ad}_{X(\vec{\theta})}\) is the adjoint action of \(X(\vec{\theta})\) on \(\mathfrak{su}(d)\) and \(\frac{1-\exp(-\mathrm{ad}_{X(\vec{\theta})})}{\mathrm{ad}_{X(\vec{\theta})}}\) has to be understood as its power series expansion
    \begin{equation}
        \frac{1-\exp(-\mathrm{ad}_{X(\vec{\theta})})}{\mathrm{ad}_{X(\vec{\theta})}} = \sum_{k=0}^\infty \frac{(-1)^k \mathrm{ad}_{X(\vec{\theta})}^k}{(k+1)!}.
    \end{equation}
    Since \(L_{\exp(X(\vec{\theta}))}\) is a diffeomorphism on \(\mathrm{SU}(d)\), \(\mathrm{d}\exp_{\vec{\theta}}(\vec{y}) = 0\) if and only if
    \begin{equation}
        X(\vec{y}) \in \text{ker}\left(\frac{1-\exp(-\mathrm{ad}_{X(\vec{\theta})})}{\mathrm{ad}_{X(\vec{\theta})}}\right).
    \end{equation}
    This kernel can be calculated to be (cf. page 316 of \cite{hilgert_structure_2012})
    \begin{equation}
    \label{eq:kernel_exp_diff}
        \mathrm{ker}\left(\frac{1-\exp(-\mathrm{ad}_{X(\vec{\theta})})}{\mathrm{ad}_{X(\vec{\theta})}}\right) = \bigoplus_{z \in I} \mathrm{ker}(\mathrm{ad}_{X(\vec{\theta})} - z \mathds{1}),
    \end{equation}
    where the index set \(I\) is given by \(I = \mathrm{spec}(\mathrm{ad}_{X(\vec{\theta})}) \cap (2\pi i \mathds{Z} \setminus\{0\})\).
    The eigenvalues of \(\mathrm{ad}_{X(\vec{\theta})}\) are exactly \(\lambda_i - \lambda_j\), which proves equation \eqref{eq:singular_points_first_kind}.

    To calculate the orthogonal complement to the span of the \(\Omega_j(\vec{\theta}_s)\) at a singular point \(\vec{\theta}_s\), we use that the orthogonal complement to the range of a linear map is the kernel of its adjoint.
    For any \(Y \in T_{\exp(X(\vec{\theta}))}\mathrm{SU}(d)\) and \(\vec{z} \in \mathds{R}^{d^2-1}\), we have that
    \begin{align}
        \langle Y, \mathrm{d}U_{\vec{\theta}}(\vec{z})\rangle & = \sum_{k=0}^\infty \frac{(-1)^k}{(k+1)!} \mathrm{Tr}\{Y^\dagger\exp(X(\vec{\theta}))\mathrm{ad}_{X(\vec{\theta})}^k(X(\vec{z}))\}
         \nonumber \\
        & = \sum_{k=0}^\infty \frac{1}{(k+1)!} \mathrm{Tr} \{\mathrm{ad}_{X(\vec{\theta})}^k(Y^\dagger \exp(X(\vec{\theta})))X(\vec{z})\} \nonumber  \\
        & = \sum_{k=0}^\infty \frac{1}{(k+1)!} \mathrm{Tr}\left\{\left[\mathrm{ad}_{X(\vec{\theta})}^k(\exp(-X(\vec{\theta}))Y)\right]^\dagger X(\vec{z}) \vphantom{\left[\mathrm{ad}_{X(\vec{\theta})}^k(\exp(-X(\vec{\theta}))Y)\right]^\dagger} \right\} \nonumber \\
        & = \left\langle X^* \circ \frac{\exp(\mathrm{ad}_{X(\vec{\theta})}) - 1}{\mathrm{ad}_{X(\vec{\theta})}} \circ \mathrm{d}\left(L_{\exp(-X(\vec{\theta}))}\right)_e(Y), \vec{z} \vphantom{\frac{\exp(\mathrm{ad}_{X(\vec{\theta})}) - 1}{\mathrm{ad}_{X(\vec{\theta})}}} \right\rangle,
    \end{align}
    where \(X^*\) is the adjoint of \(X\).
    Hence,
    \begin{equation}
        \mathrm{d}U_{\vec{\theta}}^* = X^* \circ \frac{\exp(\mathrm{ad}_{X(\vec{\theta})}) - 1}{\mathrm{ad}_{X(\vec{\theta})}} \circ \mathrm{d}\left(L_{\exp(-X(\vec{\theta}))}\right)_e\
    \end{equation}
    Since the \(X_j\) were assumed to be a basis of \(\mathfrak{su}(d)\), \(\mathrm{ker}(X^*) = \{0\}\). Therefore,
    \begin{equation}
            \mathrm{ker}(\mathrm{d}U_{\vec{\theta}}^*) = \left\{\vphantom{\Omega \in \mathrm{ker}\left(\frac{\exp(\mathrm{ad}_{X(\vec{\theta})}) - 1}{\mathrm{ad}_{X(\vec{\theta})}}\right)} \exp(X(\vec{\theta})) \Omega \mid \Omega \in \mathrm{ker}\left(\frac{\exp(\mathrm{ad}_{X(\vec{\theta})}) - 1}{\mathrm{ad}_{X(\vec{\theta})}}\right)\right\}.
    \end{equation}
    The kernel is the same as the one given in equation \eqref{eq:kernel_exp_diff}, which concludes the proof.
\end{proof}

For the parameterization given in \Cref{def:canonical_coordinates_first_kind}, two natural strategies for overparameterizations are given by
\begin{enumerate}
    \item Using an overcomplete frame \(\{X_j \mid j = 1,...,M > d^2-1\}\) of \(\mathfrak{su}(d)\) to construct the Lie algebra element \(X(\vec{\theta}) = \sum_{j=1}^M\theta_j X_j\).
    \item Multiplying multiple copies of the ansatz, e.g.
    \begin{equation}
        V(\vec{\theta}, \vec{\varphi}) = U(\vec{\theta}) U(\vec{\varphi}).
    \end{equation}
\end{enumerate}
The following theorem proves that these types of overparameterization also always admit singular points.

\begin{samepage}
\begin{lemma}
\noindent
    \begin{itemize}
        \item[(a)] Let \(X_j\), \(j=1,...,M > d^2-1\) be an overcomplete frame of \(\mathfrak{su}(n)\) and let
        \begin{equation*} 
            W(\vec{\theta}) = \exp(\sum_{j=1}^M \theta_j X_j).
        \end{equation*}
        Then this parameterization is not locally surjective, i.e. it has singular points.
        \item[(b)] Let \(X_j\), \(j=1,...,d^2-1\) be a basis of \(\mathfrak{su}(d)\). Further let \(\vec{\theta}, \vec{\varphi} \in \mathds{R}^{d^2-1}\) and 
        \begin{equation*}
            V(\vec{\theta}, \vec{\varphi}) = \exp(X(\vec{\theta})) \exp(X(\vec{\varphi})).
        \end{equation*}
        If \(\vec{\varphi}_s\) is a singular point of \(\exp(X(\vec{\varphi}))\), then \((\vec{\varphi}_s, \vec{\varphi}_s)\) is a singular point of \(V\).
    \end{itemize}
\end{lemma}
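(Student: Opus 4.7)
Both parts will exploit the differential of the exponential map together with the explicit characterization of its kernel from \Cref{lem:singular_points_exp}. For (a) the idea is to factor the overcomplete parameterization through a surjective linear map $A : \mathds{R}^{M} \to \mathfrak{su}(d)$, which reduces the local-surjectivity question to the usual basis case. For (b) the plan is to compute the Lie-algebra elements $\Omega$ of $V$ at the diagonal point $(\vec{\varphi}_s, \vec{\varphi}_s)$ and show that they all lie in the \emph{same} proper subspace $S$ that already certifies the singularity of $\exp(X(\vec{\varphi}_s))$.

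\textbf{Part (a).} Set $A(\vec{\theta}) = \sum_{j=1}^{M} \theta_j X_j$, so that $W(\vec{\theta}) = \exp(A(\vec{\theta}))$. Repeating the chain-rule computation used inside the proof of \Cref{lem:singular_points_exp} (the left translation cancels $W^{\dagger}$), I would obtain
\begin{equation*}
    \Omega_j(\vec{\theta}) = B_{A(\vec{\theta})}(X_j), \qquad B_Y := \frac{1 - \exp(-\mathrm{ad}_Y)}{\mathrm{ad}_Y}.
\end{equation*}
Since the $X_j$ span $\mathfrak{su}(d)$, $\mathrm{span}\{\Omega_j(\vec{\theta})\} = B_{A(\vec{\theta})}(\mathfrak{su}(d))$, so $W$ is locally surjective at $\vec{\theta}$ exactly when $B_{A(\vec{\theta})}$ is invertible. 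By \Cref{lem:singular_points_exp} this fails precisely when $A(\vec{\theta})$ has two eigenvalues differing by an element of $2\pi i \mathds{Z} \setminus \{0\}$. Since $A$ is surjective onto $\mathfrak{su}(d)$ by the overcompleteness assumption, I can pick $\vec{\theta}_s$ with $A(\vec{\theta}_s) = \mathrm{diag}(i\pi, -i\pi, 0, \ldots, 0) \in \mathfrak{su}(d)$ (for $d \geq 2$), which supplies the required singular point.

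\textbf{Part (b).} For $V(\vec{\theta}, \vec{\varphi}) = U(\vec{\theta}) U(\vec{\varphi})$ with $U(\vec{\xi}) = \exp(X(\vec{\xi}))$, a direct computation yields
\begin{equation*}
    \Omega_j^{\theta}(\vec{\theta}, \vec{\varphi}) = \mathrm{Ad}_{U(\vec{\varphi})^{\dagger}} \Omega_j(\vec{\theta}), \qquad \Omega_k^{\varphi}(\vec{\theta}, \vec{\varphi}) = \Omega_k(\vec{\varphi}).
\end{equation*}
Define $S := \mathrm{span}\{\Omega_k(\vec{\varphi}_s)\}$, a proper subspace of $\mathfrak{su}(d)$ by assumption. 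Evaluated at $(\vec{\varphi}_s, \vec{\varphi}_s)$, the $\varphi$-block spans exactly $S$, while the $\theta$-block spans $\mathrm{Ad}_{U(\vec{\varphi}_s)^{\dagger}}(S)$. Everything then reduces to showing $\mathrm{Ad}_{U(\vec{\varphi}_s)}(S) = S$. I would prove this by checking that $\mathrm{Ad}_{U(\vec{\varphi}_s)}$ acts as the identity on $S^{\perp}$: by \Cref{lem:singular_points_exp}, $S^{\perp}$ is spanned by the $\ket{e_i}\bra{e_j}$-type vectors with $\lambda_i - \lambda_j \in 2\pi i \mathds{Z} \setminus \{0\}$, and on each such vector $\mathrm{Ad}_{U(\vec{\varphi}_s)}$ multiplies by $e^{\lambda_i - \lambda_j} = 1$. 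Since $\mathrm{Ad}$ by a unitary is orthogonal with respect to the Hilbert--Schmidt inner product, preservation of $S^{\perp}$ forces preservation of $S$, so $\mathrm{Ad}_{U(\vec{\varphi}_s)^{\dagger}}(S) = S$ and both blocks together still live inside $S \subsetneq \mathfrak{su}(d)$.

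\textbf{Main obstacle.} Part (a) is essentially bookkeeping once the factorization through $A$ is spotted; the genuine content sits in the orthogonal-invariance step of part (b). A priori $S$ and $\mathrm{Ad}_{U(\vec{\varphi}_s)^{\dagger}}(S)$ could together exhaust $\mathfrak{su}(d)$, and ruling this out crucially uses that $U(\vec{\varphi}_s) = \exp(X(\vec{\varphi}_s))$ is generated by $X(\vec{\varphi}_s)$ itself, so that exactly those eigenspaces of $\mathrm{ad}_{X(\vec{\varphi}_s)}$ witnessing the singularity (eigenvalues in $2\pi i \mathds{Z} \setminus \{0\}$) become trivial under exponentiation. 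Any attempt to extend (b) to independent bases in the two copies, or to evaluate at a non-diagonal point, would need to revisit precisely this identification.
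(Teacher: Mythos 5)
Your proposal is correct and follows essentially the same route as the paper: part (a) reduces to the non-invertibility of the map $\frac{1-\exp(-\mathrm{ad}_{X(\vec{\theta})})}{\mathrm{ad}_{X(\vec{\theta})}}$ at points where eigenvalue differences lie in $2\pi i \mathds{Z}\setminus\{0\}$ (the paper phrases this as a kernel-dimension count, you as properness of the image, which is the same rank--nullity fact), and part (b) is the paper's computation verbatim in disguise, since showing that $\mathrm{Ad}_{\exp(X(\vec{\varphi}_s))}$ fixes $\mathrm{span}\{\Omega_j(\vec{\varphi}_s)\}^{\perp}$ and invoking orthogonality of $\mathrm{Ad}$ is exactly the paper's step of moving the conjugation across the Hilbert--Schmidt inner product and using $e^{\lambda_i-\lambda_j}=1$.
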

\end{samepage}
\begin{proof}
    \noindent
    \begin{itemize}
        \item [(a)] As can be seen from equation \eqref{eq:exp_jac}, we get two contributions to the kernel of the differential of the parameterization.
        First, there is a \(M - (d^2-1)\)-dimensional subspace for which \(X(\vec{y})=0\) and hence \(\mathrm{d}U_{\vec{\theta}}(\vec{y}) = 0\), Secondly, for suitable \(\vec{\theta}\) there exists a \(\vec{y}\) such that \(X(\vec{y}) \neq 0\) is in the kernel of \(\frac{1-\exp(-\mathrm{ad}_{X(\vec{\theta})})}{\mathrm{ad}_{X(\vec{\theta})}}\).
        Therefore we have for suitable choices of \(\vec{\theta}\) that
        \begin{align}
            \mathrm{dim}\,\mathrm{im}\left(\mathrm{d}U_{\vec{\theta}}\right) & = M - \mathrm{dim}\,\mathrm{ker}\left(\mathrm{d}U_{\vec{\theta}}\right) \nonumber \\
            & < M - (M-d^2-1) = d^2-1,
        \end{align}
        which shows that the differential is not surjective at those points.
        \item[(b)] We have that
        \begin{align}
            V^\dagger(\vec{\theta}, \vec{\varphi}) \frac{\partial}{\partial \varphi_j}V(\vec{\theta}, \vec{\varphi}) = & \Omega_j(\vec{\varphi}) \\
            V^\dagger(\vec{\theta}, \vec{\varphi}) \frac{\partial}{\partial \theta_j}V(\vec{\theta}, \vec{\varphi}) = & \exp(-X(\vec{\varphi})) \Omega_j(\vec{\theta}) \exp(X(\vec{\varphi})).
        \end{align}
        Furthermore, from \Cref{lem:singular_points_exp} we know that
        \begin{equation}
                \mathrm{span}\{\Omega_j(\vec{\varphi}_s)\} = \mathrm{span}\{i(\ket{e_i}\bra{e_j} + \ket{e_j}\bra{e_i}), \ket{e_i}\bra{e_j} - \ket{e_j}\bra{e_i} \mid  \lambda_i - \lambda_j \in 2\pi i \mathds{Z} \setminus \{0\}\}^\perp,
        \end{equation}
        where \(X(\vec{\varphi}_s)\ket{e_{\{i,j\}}} = \lambda_{\{i,j\}} \ket{e_{\{i,j\}}}\).
        Choose \(i,j\) such that \(\lambda_i - \lambda_j \in 2\pi i \mathds{Z} \setminus \{0\}\).
        Then,
        \begin{align}
            & \langle \exp(-X(\vec{\varphi}_s)) \Omega_k(\vec{\varphi}_s) \exp(X(\vec{\varphi}_s)), i(\ket{e_i}\bra{e_j} + \ket{e_j}\bra{e_i}) \rangle \nonumber \\
            & = \langle \Omega_k(\vec{\varphi}_s), i\exp(X(\vec{\varphi}_s))(\ket{e_i}\bra{e_j} + \ket{e_j}\bra{e_i}) \exp(-X(\vec{\varphi}_s))\rangle \nonumber \\
            & = \langle \Omega_k(\vec{\varphi}_s) , i(e^{\lambda_i - \lambda_j}\ket{e_i}\bra{e_j} + e^{\lambda_j - \lambda_i}\ket{e_j}\bra{e_i}) \rangle \nonumber \\
            & = \langle \Omega_k(\vec{\varphi}_s) , i(\ket{e_i}\bra{e_j} + \ket{e_j}\bra{e_i}) \rangle = 0.
        \end{align}
        And similarly for \(\ket{e_i}\bra{e_j} - \ket{e_j}\bra{e_i}\).
        Therefore, \(\exp(-X(\vec{\varphi}_s)) \Omega_j(\vec{\varphi}_s) \exp(X(\vec{\varphi}_s)) \in \mathrm{span}\{\Omega_j(\vec{\varphi}_s)\}\) and \((\vec{\varphi}_s, \vec{\varphi}_s)\) is a singular point.
    \end{itemize}
\end{proof}

\section{A modified Cayley transform}
\label{sec:modified_cayley}
The Cayley transform
\begin{align}
    \mathrm{cay}: \mathfrak{u}(d) &\rightarrow \mathrm{U}^*(d) \\
    X &\mapsto (\mathds{1}-X)^{-1}(\mathds{1} + X),
\end{align}
where we denote by \(\mathrm{U}^*(d)\) the set of unitaries \(G\), s.t \(\det(G + \mathds{1}) \neq 0\), is not surjective onto \(\mathrm{U}(d)\).
Since its image \(\mathrm{U}^*(d)\) still contains every unitary up to an irrelevant global phase factor, this does not pose a problem for our purposes.
However, we want to remark that the Cayley transform can be modified such that it becomes surjective onto \(\mathrm{SU}(d)\) in the strict mathematical sense.
Consider the following modification
\begin{align}
    \widetilde{\mathrm{cay}}: \mathfrak{u}(d) &\rightarrow \mathrm{SU}(d) \\
    X &\mapsto \left(\operatorname{det}\left(\sqrt[d]{\operatorname{cay}(X)}\right)\right)^{-1} \operatorname{cay}(X).
\end{align}
Here we define the \(d\)-th root of a unitary by acting on its eigenvalues \(\lambda_j\) with
\begin{equation}
    \sqrt[d]{\lambda_j} = \exp\left(\frac{1}{d}\log(\lambda_j)\right),
\end{equation}
where we always take the principal branch of the complex logarithm.
Since the Cayley transform never maps to a unitary with \(-1\) as an eigenvalue, the branch cut of the logarithm is always avoided.
Hence the modified Cayley transform is a smooth map.
Moreover, its image is clearly the full \(\mathrm{SU}(d)\).
Note that it is important to take the \(d\)-th root of the unitary \(\operatorname{cay}(X)\) first and the determinant second.
If the determinant was taken first, there would be no way to ensure that the branch cut is avoided.

Now take any basis \(X_j\) of \(\mathfrak{u}(d)\) and consider \(X(\vec{\theta}) = \sum_{j=1}^{d^2} \theta_j X_j\).
The derivatives
\begin{equation}
    \frac{\partial}{\partial \theta_j} \widetilde{\operatorname{cay}}(X(\vec{\theta})) = \left(\operatorname{det}\left(\sqrt[d]{\operatorname{cay}(X(\vec{\theta})}\right)\right)^{-1} \frac{\partial}{\partial \theta_j} \operatorname{cay}(X(\vec{\theta})) +  \left[\frac{\partial}{\partial \theta_j}\left(\operatorname{det}\left(\sqrt[d]{\operatorname{cay}(X(\vec{\theta})}\right)\right)^{-1}\right] \operatorname{cay}(X(\vec{\theta}))
\end{equation}
can be worked out with the Leibniz rule.
Since the \(\frac{\partial}{\partial \theta_j} \operatorname{cay}(X(\vec{\theta}))\) already span the tangent space of \(\mathrm{U}(d)\), which is \(d^2\)-dimensional, and the second term above is always part of the same \(1\)-dimensional subspace spanned by \(\operatorname{cay}(X)\), we get that the span of the \(\frac{\partial}{\partial \theta_j} \widetilde{\operatorname{cay}}(X(\vec{\theta}))\) must be at least \(d^2-1\)-dimensional (cf. e.g. eq. (0.4.5.1) in \cite{johnson_matrix_2013} and set \(\operatorname{rank} B = 1\)).
Hence they must span the tangent space of \(\mathrm{SU}(d)\) and we even get local surjectivity.
\end{document}